\documentclass[twocolumn,prx,notitlepage,longbibliography]{revtex4-1}
\usepackage{epsfig,color,amssymb}
\usepackage{subfigure}
\usepackage{amsfonts}
\usepackage{amscd}
\usepackage{amsmath}    
\usepackage{multirow}
\usepackage{dcolumn} 
\usepackage{bm} 
\usepackage{graphicx} 
\usepackage{enumerate}
\usepackage{epsfig}
\usepackage{subfigure}
\usepackage{xcolor}
\usepackage{epstopdf}
\usepackage{multirow}  
\usepackage{ulem}
\usepackage{braket}
\usepackage{tcolorbox}
\usepackage{comment}
\usepackage{enumitem} 
\usepackage{amsthm} 
\usepackage{hyperref}
\usepackage{listings}
\renewcommand{\emph}[1]{\textit{#1}}

\newtheorem{lemma}{Lemma}
\newtheorem{corollary}{Corollary}

\newtheorem{definition}{Definition}

\newtheorem{observation}{Observation}

\begin{document}

\lstset{
	language=Matlab,                		
	numbers=left,                  			
	numberstyle=\footnotesize,      		
	stepnumber=1,                   			
	numbersep=5pt,                  		
	showspaces=false,               		
	showstringspaces=false,         		
	showtabs=false,                 			
	breaklines=true,                			
	breakatwhitespace=false,        		
	escapeinside={\%*}{*)}          		
}

\preprint{APS/123-QED}
\title{Phase-Matching Quantum Key Distribution}

\author{Xiongfeng Ma}
\email{xma@tsinghua.edu.cn}

\author{Pei Zeng}
\author{Hongyi Zhou}
\affiliation{Center for Quantum Information, Institute for Interdisciplinary Information Sciences, Tsinghua University, Beijing 100084, China}
\date{}


\begin{abstract}
Quantum key distribution allows remote parties to generate information-theoretic secure keys. The bottleneck throttling its real-life applications lies in the limited communication distance and key generation speed, due to the fact that the information carrier can be easily lost in the channel. For all the current implementations, the key rate is bounded by the channel transmission probability $\eta$. Rather surprisingly, by matching the phases of two coherent states and encoding the key information into the common phase, this linear key-rate constraint can be overcome---the secure key rate scales with the square root of the transmission probability, $O(\sqrt{\eta})$, as proposed in twin-field quantum key distribution [Nature (London) 557, 400 (2018)]. To achieve this, we develop an optical-mode-based security proof that is different from the conventional qubit-based security proofs. Furthermore, the proposed scheme is measurement device independent, i.e., it is immune to all possible detection attacks. The simulation result shows that the key rate can even exceed the transmission probability $\eta$ between two communication parties. In addition, we apply phase postcompensation to devise a practical version of the scheme without phase locking, which makes the proposed scheme feasible with the current technology. This means that quantum key distribution can enjoy both sides of the world---practicality and security.
\end{abstract}

\maketitle

\section{Introduction}
Quantum key distribution (QKD)  \cite{Bennett1984Quantum,Ekert1991Quantum} is the most successful application in quantum information science, whose security was proved at the end of the last century \cite{Mayers2001Unconditional,Lo1999Unconditional,Shor2000Simple}. Since then, there has been a tremendous interest in developing this quantum technology for real-life applications, starting from the first 32-cm demonstration in the early 1990s \cite{bennett1992experimental} to the recent satellite QKD over 1200 km \cite{liao2017satellite}. In these implementations, photons are used as information carriers, owing to their fast transmission speed and robustness against decoherence from the environment. Also, optical quantum communication can be easily integrated with the current telecommunication network infrastructure.

Now, the transmission loss of photons has become a major obstacle in practical implementations. The quantum channel transmission efficiency is characterized by the transmittance $\eta$, defined as the probability of a photon being successfully transmitted through the channel and being detected. For most of the current implemented schemes, such as the well-known Bennet-Brassard 1984 (BB84) protocol \cite{Bennett1984Quantum}, single-photon sources\footnote{In practice, single-photon sources are often replaced with weak coherent state sources or heralded single-photon sources. Nevertheless, only the single-photon components are used for secure key distribution.} are employed for key information encoding. Since the photon carries the quantum information, when it is lost in the channel, no secure key can be distributed. Thus, the transmittance $\eta$ becomes a natural upper bound of the key generation rate. A more strict derivation shows a linear key-rate bound with respect to the transmittance \cite{Curty2004Entanglement,takeoka2014fundamental,Pirandola2017Fundamental}, $R\le O(\eta)$. Since the transmittance $\eta$ decays exponentially with the communication distance in the fiber-based network, this linear key-rate bound severely limits the key generation rate.



The following two approaches to overcome this rate limit have been considered: quantum repeaters \cite{EntSwap1993,Briegel1998Repeater,azuma2015all} and trusted relays. Unfortunately, using quantum repeater schemes with current technology is infeasible because they require high-quality quantum memory and complicated local entanglement distillation operations. The trusted-relay approach, however, relies on the assumption that the quantum relays between two users are trustworthy, which is difficult to ensure or verify practically; this severely undermines the primary goal of QKD, i.e., security. In 2012, the measurement-device-independent quantum key distribution (MDI-QKD) scheme was proposed to close all the detection loopholes \cite{Lo2012Measurement}, which enhances the security of a practical QKD system. Nevertheless, the key rate of the MDI-QKD scheme is still bounded by $O(\eta)$. Therefore, the linear key-rate bound  \cite{takeoka2014fundamental,Pirandola2017Fundamental} was widely believed to hold for practical QKD systems without relays.

Significant efforts have been devoted to improve the key rate by proposing different schemes. Recently, Lucamarini\textit{~et~al.} proposed a novel phase-encoding QKD protocol, called twin-field quantum key distribution (TF-QKD) \cite{Lucamarini2018TF}, which shows the possibility to overcome the key-rate limit and make a quadratic improvement over phase-encoding MDI-QKD \cite{Tamaki2012PhaseMDI}. In both schemes, single-photon detection is used, whereas coincident detection is required in other MDI-QKD schemes \cite{Lo2012Measurement,Ma2012Alternative}. From technical point of view, the single-photon detection is the key reason for the quadratic improvement. Unfortunately, a rigorous security proof is still missing at the moment. In fact, as shown later, the widely used photon number channel model \cite{Ma2008PhD} used in the security proof of MDI-QKD is proven to be invalid for this kind of setting.

Following the TF-QKD scheme \cite{Lucamarini2018TF}, we investigate phase-encoding MDI-QKD schemes with single detection \cite{Tamaki2012PhaseMDI} and propose a phase-matching quantum key distribution (PM-QKD) scheme that can surpass the linear key-rate bound, inspired by the relative-phase-encoding Bennett-1992 \cite{Bennett1992Quantum}, phase-encoding MDI-QKD \cite{Tamaki2012PhaseMDI,Ma2012Alternative}, and passive differential-phase-shift  QKD \cite{Guan2015RRPDS}. The two communication parties prepare two coherent states independently, encode the key information onto the common phase, and match phases via interference detection at an untrusted measurement site. Details are given in Sec. \ref{Sc:protocol}. By developing an optical-mode-based security proof, we show that the key rate of the proposed scheme scales with the square root of the transmittance, $R=O(\sqrt{\eta})$, in Sec. \ref{Sc:security} with technical details given in Appendix \ref{Sc:SecureProof}. Also, the proposed phase-matching scheme falls into the MDI framework, which is immune to all possible detection attacks. Our security proof can be directly applied to TF-QKD. In Sec. \ref{Sc:practical}, we deal with related practical issues and develop a phase postcompensation technique to ease the experimental requirements. In Sec. \ref{Sc:simulation}, we simulate the key-rate performance of PM-QKD and compare it to former QKD protocols, with all the practical factors taken into account. Finally, in Sec. \ref{Sc:outlook}, we discuss possible future work directions.





\section{PM-QKD protocol}\label{Sc:protocol}
In PM-QKD, the two communication parties, Alice and Bob, generate coherent state pulses independently. For a $d$-phase PM-QKD protocol, Alice and Bob encode their key information $\kappa_a, \kappa_b \in \{0, 1, \dots, d-1\}$, into the phases of the coherent states, respectively, and send them to an untrusted measurement site that could be controlled by Eve, as shown in Fig.~\ref{fig:PM}(a). Eve is expected to perform interference detection. Define a successful detection as the case where one and only one of the two detectors clicks, denoted by $L$ click and $R$ click. This interference measurement would match the phases of Alice and Bob's signals. Conditioned on Eve's announcement, Alice and Bob's key information is correlated.

In this work, we mainly focus on PM-QKD with $d=2$ and phase randomization. That is, Alice and Bob add extra random phases on their coherent state pulses before sending these pulses to Eve. After Eve's announcement, Alice and Bob announce the extra random phases and postselect the signals based on the random phases. This PM-QKD scheme is detailed below and shown in Fig.~\ref{fig:PM}(b). For simplicity, by using the name ``PM-QKD'' in the text below, we refer to the case of $d=2$ plus phase randomization.



\begin{itemize}[label={}]
\item
\textbf{State preparation}: Alice randomly generates a key bit $\kappa_a$ and a random phase $\phi_a\in[0,2\pi)$ and then prepares the coherent state $\ket{\sqrt{\mu_a}e^{i(\phi_a+\pi\kappa_a)}}_{A}$. Similarly, Bob generates $\kappa_b$ and $\phi_b\in[0,2\pi)$ and then prepares $\ket{\sqrt{\mu_b}e^{i(\phi_b+\pi\kappa_b)}}_{B}$.

\item
\textbf{Measurement}: Alice and Bob send their optical pulses, $A$ and $B$, to an untrusted party, Eve, who is expected to perform an interference measurement and record the detector ($L$ or $R$) that clicks.

\item
\textbf{Announcement}: Eve announces her detection results. Then, Alice and Bob announce the random phases $\phi_a$ and $\phi_b$, respectively.

\item
\textbf{Sifting}: Alice and Bob repeat the above steps many times. When Eve announces a successful detection, (a click from exactly one of the detectors $L$ or $R$), Alice and Bob keep $\kappa_a$ and $\kappa_b$ as their raw key bits. Bob flips his key bit $\kappa_b$ if Eve's announcement was an $R$ click. Then, Alice and Bob keep their raw key bit only if $|\phi_a-\phi_b| = 0$ or $\pi$. Bob flips his key bit $\kappa_b$ if $|\phi_a-\phi_b| = \pi$.

\item
\textbf{Parameter estimation}: For all the raw data that they have retained, Alice and Bob analyze the gains $Q_{\mu}$ and quantum bit error rates $E^Z_{\mu}$. They then estimate $E^{X}_{\mu}$ using Eq.~\eqref{eq:Emu}.

\item
\textbf{Key distillation}: Alice and Bob perform error correction and privacy amplification on the sifted key bits to generate a private key.

\item
Notations: Denote a coherent state in mode $A$ to be $\ket{\sqrt{\mu}e^{i\phi}}_A$, where $\mu$ is the intensity and $\phi$ is the phase; $\mu_a = \mu_b = \mu/2$; Alice's (Bob's) key bit $\kappa_{a(b)}\in\{0,1\}$; total gain $Q_\mu$; phase error rate $E_\mu^X$; and bit error rate $E_\mu^Z$.
\end{itemize}

\begin{figure}[htbp]
\centering
\resizebox{8cm}{!}{\includegraphics{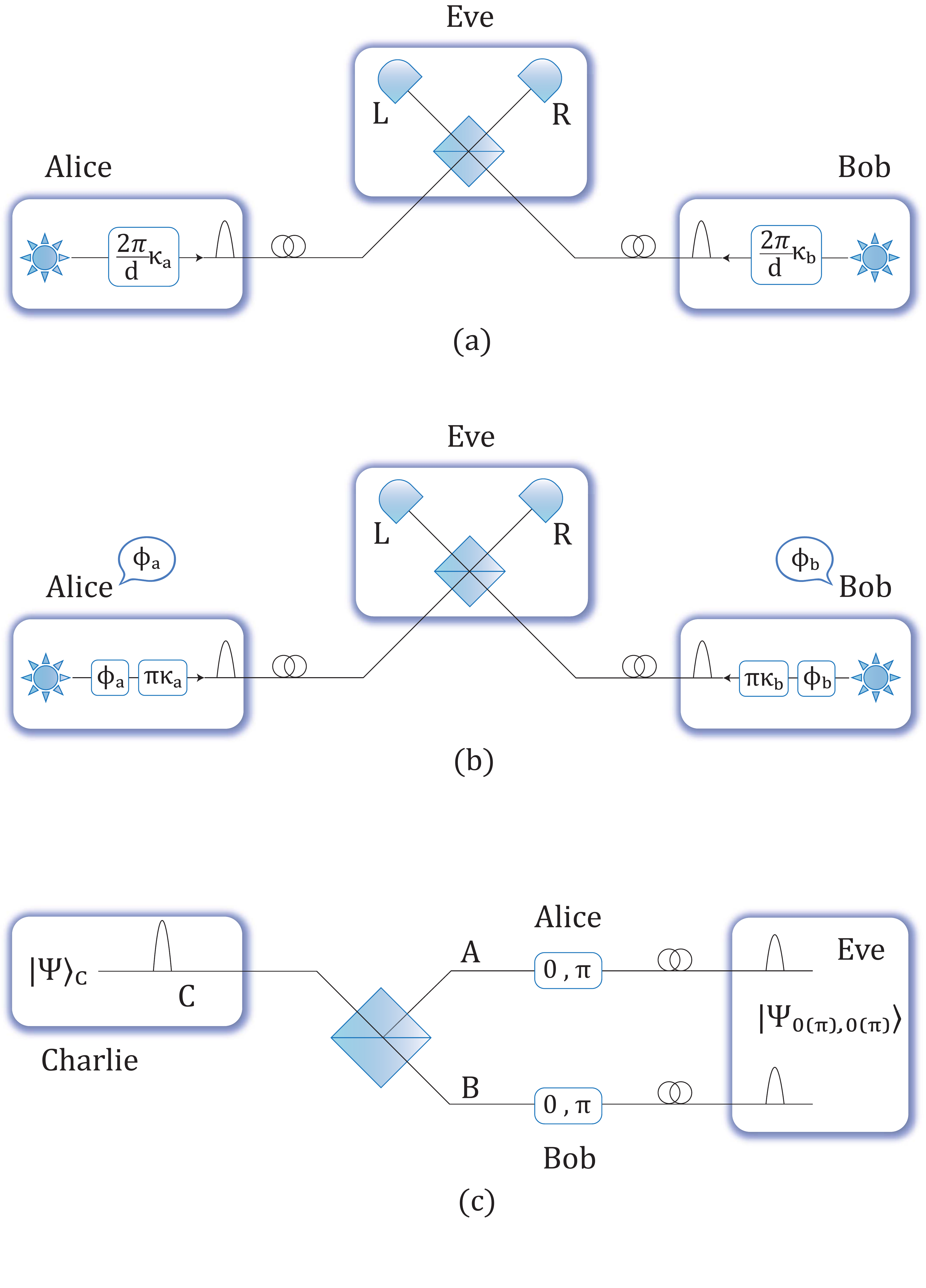}}
\caption{
(a) Schematic diagram of PM-QKD. Alice generates a coherent state, $\ket{\sqrt{\mu_a}e^{2\pi i\kappa_a/d}}$, where $\kappa_a\in \{0, 1, \dots, d-1\}$. Similarly, Bob generates $\ket{\sqrt{\mu_b}e^{2\pi i\kappa_b/d}}$. Alice and Bob send the two coherent states which interfere at an untrusted measurement site. (b) Schematic diagram of PM-QKD with $d=2$ plus phase randomization. Alice prepares $\ket{\sqrt{\mu_a}e^{i(\phi_a+\pi\kappa_a)}}$ and Bob prepares $\ket{\sqrt{\mu_b}e^{i(\phi_b+\pi\kappa_b)}}$. The two coherent states interfere at an untrusted measurement site. If the phase difference $|(\phi_a+\pi\kappa_a)- (\phi_b+\pi\kappa_b)|$ is $0$, detector $L$ clicks; if the phase difference is $\pi$, detector $R$ clicks. After Eve announces her measurement result, Alice and Bob publicly announce $\phi_a$ and $\phi_b$. (c) Equivalent scenario for the postselected signals with $\phi_a = \phi_b$. A trusted party (Charlie) prepares $\ket{\Psi}_{C}$, splits it and sends it to both Alice and Bob. Without loss of generality, we consider the case where Alice and Bob both modulate this by the same phase $0$ or $\pi$ to create the systems $A$ and $B$. If $\ket{\Psi}_{C}$ only contains odd- or even-photon number components, we can see that $\ket{\Psi_0} = \ket{\Psi_\pi}$.
}
\label{fig:PM}
\end{figure}


The above implementation of the PM-QKD protocol clearly resembles the  phase-encoding MDI-QKD protocol \cite{Ma2012Alternative,Tamaki2012PhaseMDI}, where the key bits are encoded in the relative phase of two coherent pulses (the reference and signal pulses).
However, in the PM-QKD protocol, the reference pulse can be regarded as being shared by Alice and Bob. Therefore, they no longer need to send the reference pulse, and the key becomes the global phase of the coherent signal pulses. Another significant difference between PM-QKD and the former phase-encoding MDI-QKD scheme is that no basis switching is required. In this respect, it resembles the Bennett-1992 \cite{Bennett1992Quantum} and passive DPS \cite{Guan2015RRPDS} QKD protocols. Note that, a similar proposal named ``MDI-B92'' protocol has been proposed by Ferenczi in 2013 \cite{Ferenczi2013}. With the decoy-state method, the quantum part of PM-QKD would be similar to that of the TF-QKD protocol without basis sifting.


\section{Security of PM-QKD} \label{Sc:security}
To provide an intuitive understanding of the manner in which PM-QKD works, we demonstrate its security by considering an equivalent scenario shown in Fig.~\ref{fig:PM}(c). Here, a trusted party (Charlie) prepares a pure state $\ket{\Psi}_{C}$, splits it using a $50$--$50$ beam splitter, and sends it to Alice and Bob separately. Alice and Bob encode their key information $\kappa_a$ and $\kappa_b$ into systems $A$ and $B$ by modulating the phases, and then they send these to Eve who is supposed to tell whether $|\kappa_a-\kappa_b|=0$ or 1. Thus, the four possible output states that could be sent to Eve can be expressed as $\ket{\Psi_{0, 0}}$, $\ket{\Psi_{0,\pi}}$, $\ket{\Psi_{\pi, 0}}$, and $\ket{\Psi_{\pi, \pi}}$.

Without the loss of generality, we consider the following case in which both encoded key bits are the same, $\kappa_a = \kappa_b$. Eve attempts to learn the key bit $\kappa_{a(b)}$ from the state sent to her, which is either $\ket{\Psi_{0, 0}}$ or $\ket{\Psi_{\pi, \pi}}$. Here, the phase, controlled by $\kappa_a$ and $\kappa_b$ and modulated into $A$ and $B$, has become the ``global phase'' of the combined system $A$ and $B$. If $\ket{\Psi}_C$ is a Fock state $\ket{k}_C$ with $k$ photons, then $\ket{\Psi_{0,0}} = \ket{\Psi_{\pi,\pi}}$, since the global phases of Fock states are meaningless. In this case, Eve cannot tell in principle whether the modulated phases are $0$ or $\pi$ and can only learn that $\kappa_a = \kappa_b$.

In our PM-QKD protocol, both Alice and Bob transmit weak coherent pulses, $\ket{\sqrt{\mu_{a}}e^{i(\phi_a+\pi\kappa_a)}}_{A}$ and $\ket{\sqrt{\mu_{b}}e^{i(\phi_b+\pi\kappa_b)}}_{B}$ to Eve. The phase sifting condition $\phi_a = \phi_b =\phi$ is equivalent to imagining that Charlie employs a source state of $\ket{\Psi}_{C} = \ket{\sqrt{\mu} e^{i\phi}}_{C}$ in Fig.~\ref{fig:PM}(b). For a phase-randomized state $\ket{\sqrt{\mu} e^{i\phi}}_{C}$, it is equivalent for Charlie to prepare a Fock state $\ket{\Psi}_{C} = \ket{k}_{C}$ with a probability of $P(k) = e^{-\mu} \mu^k/k!$. Thus, the PM-QKD protocol is secure if Eve cannot learn the phase $\phi$. However, in the real PM-QKD protocol, the phase $\phi$ will eventually be announced during the sifting process. When this happens, Charlie's source $\ket{\Psi}_C$ can no longer be regarded as combinations of different photon-number states $\ket{k}_C$. The key challenge of the security proof of PM-QKD lies in the fact that the quantum source cannot be regarded as a mixture of photon number states, after Alice and Bob announce the phases, $\phi_a$ and $\phi_b$. That is, the photon number channel model \cite{Ma2008PhD} and the ``tagging'' method used in the security proof by Gottesman et al.~\cite{gottesman04} (we will refer it as GLLP security proof) can no longer be applied. In Appendix~\ref{sc:attack}, a beam-splitting attack is proposed to show that the GLLP formula is incorrect after the phase announcement. Therefore, one cannot simply reduce a randomized-phase coherent state protocol to a single-photon-based protocol.

Our security proof of PM-QKD is based on analyzing the distillable entanglement of its equivalent entanglement-based protocol. Following the Shor-Preskill security argument \cite{Shor2000Simple}, the key rate of PM-QKD protocol (for the sifted signals) is given by
\begin{equation} \label{eq:keyrate}
\begin{aligned}
r_{PM} &\ge 1 - H(E_{\mu}^Z) - H(E_{\mu}^{X}), \\
\end{aligned}
\end{equation}
where $E^Z_\mu$ is the quantum bit error rate (QBER) that can be directly estimated in the experiment; $E^X_\mu$ is the phase error rate, which reflects the information leakage; and $H(x)=-x\log_2x-(1-x)\log_2(1-x)$ is the binary Shannon entropy function. We demonstrate in Appendix \ref{Sc:SecureProof} that $E^X_\mu$ can be bounded by
\begin{equation} \label{eq:Emu}
\begin{aligned}
E_{\mu}^X &\le e^Z_0q_0 +\sum_{k=0}^\infty e_{2k+1}^Z q_{2k+1} + (1 - q_0 - \sum_{k=0}^\infty q_{2k+1}). \\
\end{aligned}
\end{equation}
Here, $q_k$ is the estimated ratio of the ``$k$-photon signal'' to the full detected signal:
\begin{equation} \label{eq:qk}
\begin{aligned}
q_k &=  \frac{(e^{-\mu} \mu^k/k!)Y_k }{Q_\mu}, \\
\end{aligned}
\end{equation}
where $Q_\mu$ is the total gain of the pulses, and $Y_k$ and $e^Z_k$ are the yield and bit error rate, respectively, if Charlie's light source is a $k$-photon number state. Alice and Bob can estimate the yield and bit error rate via the decoy-state method \cite{Hwang2003Decoy,Lo2005Decoy,Wang2005Decoy}. Note that the parameters $Y_k$ and $ e^Z_k$ can still be used to characterize Eve's behavior even though the source is not actually a combination of photon-number states.

Unlike most of the existing security analysis of discrete-variable QKD, our analysis is not single-qubit based. For a long time, the sources in QKD implementations, such as weak coherent sources and spontaneous parametric down-conversion sources, have been fabricated as an approximation of single qubit, following the BB84 protocol \cite{Bennett1984Quantum}. Here, we show the security of PM-QKD with a coherent light source by directly applying the Lo-Chau entanglement distillation argument \cite{Lo1999Unconditional} on analyzing the optical modes. This technique could be helpful for both a new QKD scheme design and security analysis.

In the equivalent scenario considered above, shown in Fig.~\ref{fig:PM}(b), a trusted party Charlie is introduced. We need to emphasize that the virtual Charlie will be removed in the real implementation in Sec. \ref{Sc:practical}. If Charlie does exist, Eve may inject some probes after Charlie's outputs, and then she measures them at the output of Alice and Bob to learn their operations. This is the main problem of detection-device-independent QKD \cite{lim2014detector,gonzalez2015quantum}. In the PM-QKD protocol shown in Fig.~\ref{fig:PM}(a), Alice and Bob can simply isolate their light source and modulators in an optical circulator to prevent such Trojan-horse-like attacks. Hence, the PM-QKD scheme, like other MDI-QKD schemes, is secure against Trojan-horse-like attacks.

\section{Practical implementation}\label{Sc:practical}

Now we address a few practical issues. In the protocol shown in Fig.~\ref{fig:PM}, Alice and Bob only retain their signals when their announced phases, $\phi_a$ and $\phi_b$ are either exactly the same or with a $\pi$ difference. However, since the announced phases are continuous, the successful sifting probability tends to zero. Moreover, we assume that Alice's and Bob's laser sources are perfectly locked, such that their phase references meet, but it is very challenging in practice to achieve such phase locking.

To address these practical issues, we employ a phase postcompensation method \cite{Ma2012Alternative}, where Alice and Bob first divide the phase interval $[0,2\pi)$ into $M$ slices $\{\Delta_j\}$ for $0\le j \le M-1$, where $\Delta_j = [2\pi j/M, 2\pi (j+1)/M)$. Instead of comparing the exact phases, Alice and Bob only compare the slice indexes. This makes the phase-sifting step practical, but introduces an intrinsic misalignment error. Also, Alice and Bob do not perform the phase-sifting immediately in each round, and instead, they do it in data postprocessing. In the parameter estimation step, they perform the following procedures, as shown in Fig.~\ref{fig:phasesel}.

\begin{enumerate}
\item
For each bit, Alice announces the phase slice index $j_a$ and randomly samples a certain amount of key bits and announces them for QBER testing.

\item
In the phase postcompensation method, given an offset compensation $j_d\in \{0,1,...,M/2 -1\}$, Bob sifts the sampled bits with the phase sifting condition $|j_b + j_d - j_a| \mod M = 0$ or $M/2$. For the case of $M/2$, Bob flips the key bit $\kappa_b$. After sifting, Bob calculates the QBER $E^Z$ with Alice's sampling key bits. Bob tries all possible $j_d\in\{0,1,\cdots,M-1\}$, and figures out the proper $j_d$ to minimize the sampling QBER. Using the phase sifting condition with the proper $j_d$, Bob sifts (and flips if needed) the unsampled bits and announces the locations to Alice. Alice sifts her key bits accordingly.

\item
Alice and Bob analyze the overall gain $Q_{\mu_i}$ and QBER $E^Z_{\mu_i}$ for different values of intensities $\mu_a =\mu_b =\mu_i/2$. They estimate the phase error rate $E^{(X)}_{\mu}$ by Eq.~\eqref{eq:Emu}.
\end{enumerate}

Here, in the phase postcompensation step, Bob does not need to fix a $j_d$ for the whole experiment. Bob can group the raw data into data blocks, and then he is able to adjust the offset $j_d$ for different data blocks. Alternatively, Bob can also adjust $j_d$ in real time based on a prediction via data-fitting samples nearby. A detailed description of the operations and security arguments is presented in Appendix \ref{Sc:SecurePost}. We emphasize that the fluctuation of phases will only introduce additional bit errors, but not affect the security.


\begin{figure}[htbp]
\centering
\includegraphics[width=8cm]{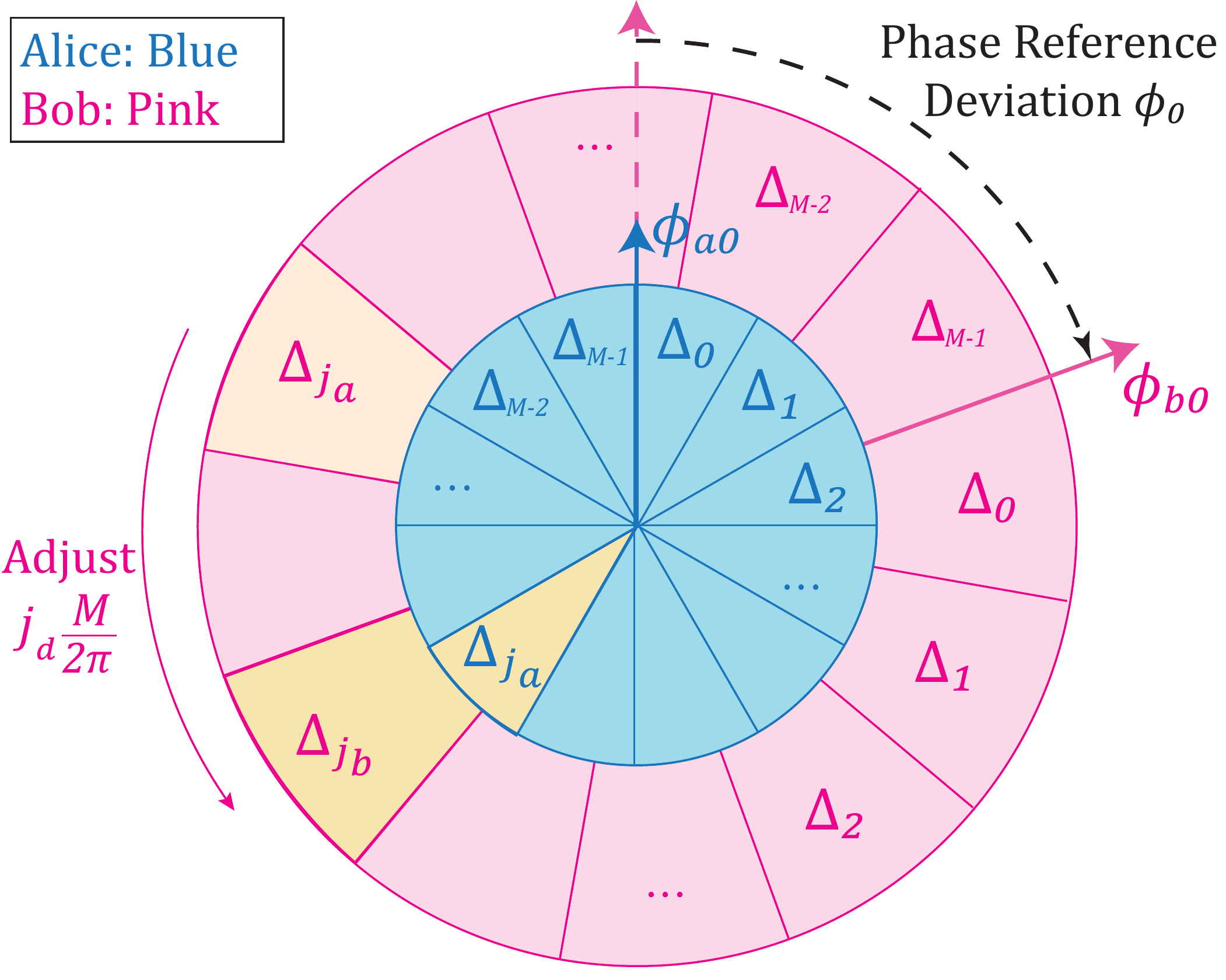}
\caption{Phase postcompensation. Without loss of generality, here we consider the case $\kappa_a = \kappa_b$. Denote the phase references of Alice and Bob as $\phi_{a0}$ and $\phi_{b0}$, and hence the reference deviation is $\phi_0=\phi_{b0}-\phi_{a0} \mod 2\pi$. Bob can figure out the proper phase compensation offset $j_d$ by minimizing the QBER from random sampling as follows. Bob sets up a $j_d$, sifts the bits by $|j_b + j_d - j_a| = 0$, and evaluates the sample QBER. He tries all possible $j_d\in\{0,1,\cdots,M-1\}$ and figures out the proper $j_d$ to minimize the sample QBER. Then, he announces the sifted locations of unsampled bits to Alice. As shown in the figure, we set $M=12$, and the reference deviation $\phi_0= 70^\circ$; hence Bob can set $j_d =2$ to compensate the effect of $\phi_0$.
} \label{fig:phasesel}
\end{figure}

An alternative method to phase postcompensation is that, Alice and Bob can generate strong pulses independent of quantum signals for phase calibration. The two pulses interfere at the measurement site so that Alice and Bob can identify the phase fluctuations between two channels. Note that the strong pulses can be set in an optical mode slightly deviated from the quantum signals to reduce cross-talks. According to the phase difference measured from the strong calibration pulses, Alice and Bob can estimate the offset $j_d$ accurately. This is a feasible way to replace the phase postcompensation method. The main difference of this phase calibration method from the usual phase-locking method \cite{santarelli1994heterodyne} is that, no active feedback is required. Alice and Bob learn the phase difference only for phase sifting in data postprocessing.

Note that similar ideas of phase postcompensation and phase calibration have already been adopted in some continuous-variable QKD protocols, such as the Gaussian-modulated coherent-state protocol \cite{Qi2007experimental,Qi2015generating} and the self-referenced protocol \cite{Soh2015self}.

Another important issue is how Alice and Bob set random phases $\phi_a, \phi_b$. In practice, it would be experimentally challenging to continuously set an accurate phase to a coherent state. As discussed above, they only need to set the slice indexes $j_a, j_b$ instead of exactly modulating phases. There are two methods to achieve this.
\begin{itemize}
\item
Alice and Bob first generate strong laser pulses with (unknown) randomized phases, either by turning on/off the lasers or active phase randomization. They then split the pulses and apply homodyne detection on one beam to measure the phase $\phi_a, \phi_b$ accurately enough to determine the slice indexes $j_a, j_b$. They use the rest beam for further quantum encoding.

\item
Alice and Bob first generate pulses with stable phases. Then, they actively randomize the phase using phase modulators. Alice and Bob can record the corresponding random numbers as for slice indexes $j_a, j_b$. Note that the phase randomization can be discrete, which has been proven to be secure and efficient with a few discrete phases \cite{zhu2015discrete}.
\end{itemize}

In order to optimize the estimated phase slice shift $j_d$ for minimizing the bit error rate, the phase compensation or calibration should be resettled with respect to phase drift. In practice, there are two major factors which may cause phase drift. One is the laser linewidth $\Delta \nu$, which causes a dispersion effect on the output pulse. The phase varies randomly with respect to the coherent time $T_{coh} \approx (\Delta \nu)^{-1}$. To alleviate the dispersion, a CW laser source with a long coherence time should be employed. The other major factor for phase drift is the variation of optical path length $\Delta L$. In a recent work of TF-QKD \cite{Lucamarini2018TF}, Lucamarini\text{~et~al.} experimentally tested the phase drift in a MDI setting. The results show that the phase drift follows a Gaussian distribution with zero mean and a standard deviation of about $6.0$ rad ms$^{-1}$ for a total distance of $550$ km. To enhance the performance of PM-QKD protocol, former works on phase stabilization of optical fibers \cite{Droste2013optical,Carvacho2015postselection,Lipka2017optical} can be employed.

With all the practical factors taken into account, the final key-rate formula can be expressed as
\begin{equation} \label{eq:keyRate}
\begin{aligned}
R_{PM} &\ge \dfrac{2}{M}Q_\mu [ - f H(E_{\mu}) + 1 - H(E_{\mu}^X)], \\
\end{aligned}
\end{equation}
where the phase error rate $E_{\mu}^X$ is given by Eq.~\eqref{eq:Emu}, $2/M$ is the sifting factor and $f$ is the error correction efficiency.

\section{Simulation results} \label{Sc:simulation}
We simulate the performance of PM-QKD with the parameters given in Fig.~\ref{fig:simulation}(b), assuming a lossy channel that is symmetrical for Alice and Bob. The dark count rate $p_d$ is from Ref. \cite{Tang2014MDI200}, and the other parameters are set to be typical values. The simulation formulas for $Q_\mu, E^Z_\mu$ and $E^X_\mu$ of PM-QKD are given in Eqs.~\eqref{eq:Qmu}, ~\eqref{eq:Emusimu}, and ~\eqref{eq:EmuXzoom}, respectively, in Appendix \ref{Sc:GainQBER}. The simulation formulas for BB84 and MDI-QKD are listed in Appendix \ref{Sc:Others}.

The simulation results are shown in Fig.~\ref{fig:simulation}(a). From the figure, one can see that PM-QKD is able to exceed the linear key-rate bound when $l> 250$ km with practical settings such as dark counts, misalignment errors, and sifting factors. Compared with MDI-QKD, PM-QKD can achieve a longer transmission distance of $l=450$ km and the key rate is increased by approximatedly $4\sim6$ orders of magnitude when $l>300$ km. Moreover, if we set up a cutoff line of key rate $R = 10^{-8}$ as real-life consideration, then the longest practical transmission distance of PM-QKD is over $400$ km, whereas the ones of BB84 and MDI-QKD are all lower than $250$ km.

Several QKD schemes are compared in Fig.~\ref{fig:simulation}(c). The comparison shows that the PM-QKD scheme outperforms the existing protocols in the following aspects. First, the PM-QKD scheme has a quadratic improvement of key rate, $O(\sqrt{\eta})$. In the security aspect, PM-QKD enjoys the measurement-device-independent nature that is immune to all detection attacks. In the practical aspect, it removes the requirement of the basis switch, which can simplify the experiment apparatus and reduce the randomness consumption. 

Here, we would like to clarify that the linear key-rate bound by Pirandola et al.~\cite{Pirandola2017Fundamental} is derived for point-to-point QKD protocols. In the PM-QKD or other MDI-QKD schemes, there is an untrusted relay held by Eve. The quadratic improvement in PM-QKD seems unsurprising if we regard the untrusted middle node as a quantum repeater.

\begin{figure*}
\centering
\includegraphics[width=16cm]{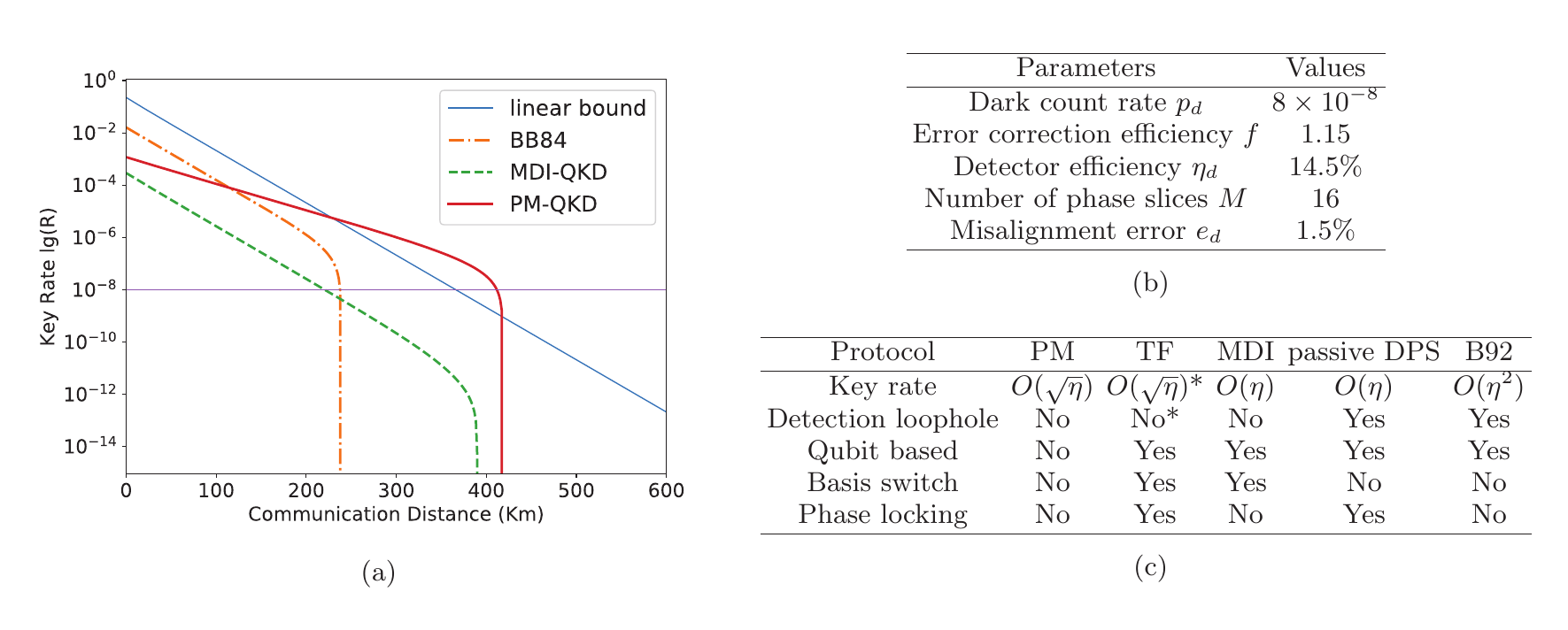}
\caption{(a) Simulation of our PM-QKD protocol. For the considered simulation parameters, the key rate of PM-QKD surpasses that of the conventional BB84 protocol when $l>120$ km and it exceeds the linear key-rate bound by Pirandola et al.~\cite{Pirandola2017Fundamental} when $l>250$ km. In addition, our protocol is also able to achieve a long transmission distance of $l=418$ km. (b) Parameters used for simulation. (c) Comparison of different QKD protocols: PM-QKD; TF-QKD \cite{Lucamarini2018TF}; MDI-QKD \cite{Lo2012Measurement}; passive differential phase-shift (DPS) QKD \cite{Guan2015RRPDS}; Bennett-1992 (B92) QKD \cite{Bennett1992Quantum}. Key rate: dependence of the key rate on the channel transmittance $\eta$; Detection loophole: whether the protocol is immune to all detection loopholes; Qubit based: whether the security analysis is based on the single-qubit case; Basis switch: whether the source (measurement device) must prepare (measure) states in complementary bases;  Phase locking: whether the protocol requires a fixed phase reference frame for the two users. *Our security proof and performance analysis apply to TF-QKD if the basis information $X$, $Y$ is ignored.}
\label{fig:simulation}
\end{figure*}

Note that the security of recently proposed TF-QKD protocol \cite{Lucamarini2018TF} can be reduced to the security of PM-QKD protocol if the information for the two bases $X$, $Y$ is ignored in TF-QKD It remains an open problem whether we can reach a higher key rate by taking advantage of the basis information together.

\section{Outlook} \label{Sc:outlook}

There are a few interesting directions on PM-QKD. First, it is interesting to work out the security of the general $d$-phase PM-QKD protocol shown in Fig.~\ref{fig:PM}(a) with and without phase randomization. Meanwhile, in the above discussions, PM-QKD in Fig.~\ref{fig:PM}(b) is treated as a single-basis scheme with phase randomization. In a dual viewpoint, we can regard the different global phases $\phi_a, \phi_b$ as different bases and naturally treat the phase-sifting step as basis-sifting. This is interesting, since it shows the advantage of QKD using multi-nonorthogonal bases. Moreover, this multibases view may help us to generalize the phase-matching scheme to, for example, a polarization-based one.

Second, the phase-sifting factor $2/M$ is very small, which undermines the advantage of PM-QKD for near-distance (i.e., $l<120$ km) communication. One possible solution is to apply a biased phase randomization; i.e., the $\phi_{a(b)}$ is not uniformly randomized in $[0, 2\pi)$.

Third, we bound the total phase error $E^X_\mu$ by pessimistically considering the phase errors $e^X_k$ for even photon number components to be $1$ in Eq.~\eqref{eqn:EX Fock}, as shown in Appendix \ref{Sc:decoy}. There may be a scope to improve the bound of total phase error $E^X_\mu$. For example, the two-photon error $e_2^Z$ and $e_2^X$ can be better estimated with more decoy states, which leads to a tighter bounds of $E^X_\mu$.

Finally, since we can overcome the key-rate linear bound, it is interesting to investigate a repeaterless secret key capacity bound for QKD, for example, whether it is possible to push the key rate to $R = O(\eta^{1/3})$ or $O(\eta^{1/4})$ without repeaters. Note that the key-rate bound has been derived for the single-repeater case \cite{Pirandola2016Capacities}, $-\log(1-\sqrt{\eta})$, which is close to $\sqrt{\eta}$ when $\eta$ is small. So far, our key rate, Eq.~\eqref{eq:keyRate}, is still far away from this bound. It is an interesting direction to improve the PM-QKD protocol to approach this bound.

\acknowledgments
This work was supported by the National Natural Science Foundation of China Grant No.~11674193 and the National Key R\&D Program of China Grants No. 2017YFA0303900 and No. 2017YFA0304004. We thank T.~Chen, S.~Pirandola, and F.~Xu for enlightening discussion, especially T.~Chen for providing the alternative methods for compensating the phase reference difference.

\begin{appendix}
\section{Security Proof of PM-QKD}\label{Sc:SecureProof}
In this section, we provide a security proof for the PM-QKD protocol with $d=2$ via entanglement distillation \cite{Bennett1996BDSW}. Note that the existing security proofs of discrete-variable QKD assume qubit (or qudit) states transmitted through the channel. Here, we develop a new security proof by exploring continuous optical modes directly.

The organization of the proof is presented as follows. First, we briefly review the main results of the Lo-Chau \cite{Lo1999Unconditional} and Shor-Preskill \cite{Shor2000Simple} security proofs in Appendix~\ref{Sc:SecureEDP}. Then, we provide a virtual entanglement-based protocol (Protocol I) in Appendix~\ref{Sc:SecureScenario} and present a key result as Lemma \ref{Lem:parity}. In Appendix~\ref{Sc:SecureCoherent}, we employ a few equivalency arguments, and eventually, we prove the security of PM-QKD in Appendix~\ref{Sc:SecureAnnounce}. In Appendix~\ref{Sc:decoy}, we employ the decoy-state method to give tight bounds on phase error rates. In Appendix~\ref{Sc:SecurePost}, we solve the phase-reference issue with the phase postcompensation technique.

Here, we introduce some definitions and notations for later discussions. For an optical mode $A$, whose creation operator is $a^\dag$, its Hilbert space is denoted as $\mathcal{H}^A$. Let $\mathcal{D}(\mathcal{H}^A)$ denote the space of density operators acting on $\mathcal{H}^A$ and $\mathcal{L}(\mathcal{H}^A)$ denote the space of linear operators acting on $\mathcal{H}^A$. A Fock state $\ket{k}_A$ with $k$ photons in mode $A$ is defined as
\begin{equation} \label{eq:defFock}
\ket{k}_A \equiv \dfrac{(a^\dag)^k}{\sqrt{k!}} \ket{0}_A,
\end{equation}
where $\ket{0}_A$ is the vacuum state. A coherent state $\ket{\alpha}_A$ is defined as
\begin{equation}
\begin{aligned}
\ket{\alpha}_A &\equiv e^{-\frac{1}{2}|\alpha|^2} \sum_{k=0}^{\infty} \dfrac{\alpha^k}{\sqrt{k!}} \ket{k}_A \\
&= e^{-\frac{1}{2}|\alpha|^2} \sum_{k=0}^{\infty} \dfrac{(\alpha a^\dag)^k}{k!} \ket{0}_A  \\
&= e^{-\frac{1}{2}|\alpha|^2} e^{\alpha a^\dag} \ket{0}_A.
\end{aligned}
\end{equation}
The photon number of $\ket{\alpha}_A$ follows a Poisson distribution,
\begin{equation}
P(k) = e^{-\mu} \dfrac{\mu^k}{k!},
\end{equation}
where $\mu = |\alpha|^2$ is the mean photon number or the light intensity.

Define the odd subspace $\mathcal{H}^A_{odd}\subseteq\mathcal{H}^A$ which is spanned by the odd Fock states $\{\ket{k}_A\}$, where all the photon numbers $k$ are odd. Similarly, define the even subspace $\mathcal{H}^A_{even}\subseteq\mathcal{H}^A$, where the photon numbers are even. Name a state $\rho\in \mathcal{D}(\mathcal{H}^A_{odd})$ to be the odd state and a state $\rho\in \mathcal{D}(\mathcal{H}^A_{even})$ to be even state. Name a state $\rho\in \mathcal{D}(\mathcal{H}^A_{odd})$ or $\rho\in \mathcal{D}(\mathcal{H}^A_{even})$ to be parity state.
Denote

\begin{equation} \label{eqn:coherent state parity}
\begin{aligned}
\ket{\alpha_{odd}}_A &\equiv \dfrac{1}{2\sqrt{c_{odd}}} (\ket{\alpha}_A - \ket{-\alpha}_A) \\
&= \dfrac{1}{\sqrt{c_{odd}}} e^{-\frac{1}{2} |\alpha|^2 } \sum_{k=0}^{\infty} \dfrac{(\alpha)^{2k+1}}{\sqrt{(2k+1)!}}\ket{2k+1}_A, \\
\ket{\alpha_{even}}_A &\equiv \dfrac{1}{2\sqrt{c_{even}}} (\ket{\alpha}_A + \ket{-\alpha}_A) \\
&= \dfrac{1}{\sqrt{c_{even}}} e^{-\frac{1}{2} |\alpha|^2 } \sum_{k=0}^{\infty} \dfrac{(\alpha)^{2k}}{\sqrt{(2k)!}}\ket{2k}_A, \\
\end{aligned}
\end{equation}

where
\begin{equation}
\begin{aligned}
\label{eqn: c normal}
c_{odd} &= e^{-\mu} \sum_{k=0}^{\infty} \dfrac{\mu^{2k+1}}{(2k+1)!} = e^{-\mu} \sinh \mu, \\
c_{even} &= e^{-\mu} \sum_{k=0}^{\infty} \dfrac{\mu^{2k}}{(2k)!} = e^{-\mu} \cosh \mu, \\
\end{aligned}
\end{equation}
are the normalization factors with $c_{odd} + c_{even} =1$, and $\mu = |\alpha|^2$ is the light intensity. It is not hard to see that $\ket{\alpha_{odd}}_A \in\mathcal{H}^A_{odd}$ and $\ket{\alpha_{even}}_A\in\mathcal{H}^A_{even}$.

Denote the photon number measurement $\{M_k\}_k$ as
\begin{equation}
M_k \equiv \ket{k}_A\bra{k}.
\end{equation}
Denote the parity measurement $\{M_{odd}, M_{even}\}$ as
\begin{equation} \label{eqn:Mpar}
\begin{aligned}
M_{odd} &\equiv \sum_{k=0}^{\infty} \ket{2k+1}_A\bra{2k+1}, \\
\quad M_{even} &\equiv \sum_{k=0}^{\infty} \ket{2k}_A\bra{2k}. \\
\end{aligned}
\end{equation}

For a beam splitter (BS), we express the input optical modes as $A, B$, with creation operators $a^\dag, b^\dag$, respectively, and the output optical modes as $C, D$, with creation operators $c^\dag, d^\dag$, respectively. The BS transforms modes $A$ and $B$ to $C$ and $D$ according to
\begin{equation}
\begin{pmatrix}
c^\dag \\
d^\dag
\end{pmatrix}
=
\dfrac{1}{\sqrt{2}}
\begin{pmatrix}
1 & 1 \\
1 & -1
\end{pmatrix}
\begin{pmatrix}
a^\dag \\
b^\dag
\end{pmatrix}.
\end{equation}

For a qubit system $A^\prime$, the Hilbert space is denoted by $\mathcal{H^{A^\prime}}$. The Pauli operators on $\mathcal{H^{A^\prime}}$ are denoted as $X_{A^\prime}, Y_{A^\prime}$ and $Z_{A^\prime}$. The eigenstates of $X_{A^\prime}, Y_{A^\prime}$ and $Z_{A^\prime}$ are denoted by $\{\ket{\pm}_{A^\prime}\}; \{\ket{\pm i}_{A^\prime}\}$ and $\{\ket{0}_{A^\prime},\ket{1}_{A^\prime}\}$, respectively. The $X$-basis measurement is denoted by $M_X: \{\ket{+}_{A^\prime}\bra{+}, \ket{-}_{A^\prime}\bra{-} \}$. The $Z$-basis measurement is denoted by $M_Z: \{\ket{0}_{A^\prime}\bra{0}, \ket{1}_{A^\prime}\bra{1} \}$.

A control-phase gate, $C_\pi$, from a qubit $A^\prime$ to an optical mode $A$, is defined as
\begin{equation} \label{eq:Cpi}
\begin{aligned}
C_{\pi} \equiv \ket{0}_{A^\prime}\bra{0}\otimes U_A(0) + \ket{1}_{A^\prime}\bra{1}\otimes U_A(\pi), \\
\end{aligned}
\end{equation}
where $U_A(\phi)\equiv e^{i\phi a^\dag a}$ is a $\phi$-phase shifter operation on the mode $A$.

\begin{definition} \label{def:equiv}
Two QKD protocols are \emph{equivalent} if the following criteria are satisfied:
\begin{enumerate}
\item
The quantum states transmitted in the channel are the same.

\item
All announced classical information is the same.

\item
Alice and Bob perform the same measurement on the same quantum states to obtain the raw key bits.

\item
Alice and Bob use the same postprocessing to extract secure key bits.
\end{enumerate}
\end{definition}

Obviously, equivalent QKD protocols will lead to identical key rates.

\subsection{Security proof via entanglement distillation}\label{Sc:SecureEDP}
Here, we briefly review the security proof based on entanglement distillation \cite{Lo1999Unconditional,Shor2000Simple}. Suppose that in QKD, Alice generates an $l$-bit key string $S$, and Bob generates an estimate of the key string $S'$. Denote the space of $S$ as $\mathcal{S}$, whose dimension is $2^l$. An adversary Eve attempts to learn about $S$ from the information leakage.

After QKD, Alice and Bob should share the \emph{same} key \emph{privately}. A key $S$ is called ``correct'', if $S'= S$ for any strategy of Eve, and is called ``$\epsilon_{cor}$-correct'', if
\begin{equation} \label{eqn:Defcorrect}
\begin{aligned}
Pr[S'\neq S]\leq \epsilon_{cor}.
\end{aligned}
\end{equation}
To define a private key, consider the quantum state $\rho_{AE}$ that describes the correlation between Alice's classical key $S$ and Eve's system $E$ (for any attacks). A key $S$ is called ``$\epsilon_{sec}$-private'' from $E$ if \cite{Ben2005composable,Renner2005universally}
\begin{equation}
\min_{\sigma_E} \dfrac{1}{2} || \rho_{AE} - \omega_A \otimes \sigma_E ||_1 \leq \epsilon_{sec},
\end{equation}
where $\omega_A = (|\mathcal{S}|)^{-1} \sum_{S=0}^{|\mathcal{S}|-1} \ket{S}_A\bra{S}$ is the equally mixed key state over all possible keys in space $\mathcal{S}$ and $||\cdot||_1$ is the trace norm. A QKD protocol is called ``secure'' if the generated key is both correct and private. It is called ``$\epsilon$-secure'' if the generated key is $\epsilon_{cor}$-correct and $\epsilon_{sec}$-private with $\epsilon=\epsilon_{cor} + \epsilon_{sec}$.

Here, we consider the case where Alice and Bob share an $m$-pair-of-qubits gigantic state $\rho_{A^\prime B^\prime}^{(m)}$. If we can show that
\begin{equation}
F(\rho_{A^\prime B^\prime}^{(m)}, \ket{\Phi^+}_{A^\prime B^\prime}^{(m)}) \geq \sqrt{1 - \epsilon_l^2}
\end{equation}
where $\ket{\Phi^+}_{A^\prime B^\prime}^{(m)}$ is the state of $m$ perfect EPR pairs $\ket{\Phi^+} = (\ket{00}+\ket{11})/\sqrt{2}$, and $0\le\epsilon_l\le1$, then it can be shown that it is $\epsilon_l$-private and $\epsilon_l$-correct, and then it is $2\epsilon_l$-secure. That is, to show the security of a QKD protocol, we only need to show that the state for key extraction $\rho_{A^\prime B^\prime}^{(m)}$ is close to the perfect EPR pairs $\ket{\Phi^+}_{A^\prime B^\prime}^{(m)}$.

Here is the intuition of the Lo-Chau security proof \cite{Lo1999Unconditional}. Suppose Alice and Bob share an $n$-pair-of-qubit gigantic state $\rho_{A^\prime B^\prime}^{(n)}$ at the beginning. If they perform an efficient entanglement distillation protocol (EDP) to distill $m$ EPR pairs, then they will be able to share nearly $m$ bit correct and private keys. Now the task becomes how to find the right EDP.

Bennett, DiVincenzo, Smolin, and Wootters (BDSW) show that \cite{Bennett1996BDSW}, if an $n$-pair-of-qubit state $\rho_{A^\prime B^\prime}^{(n)}$ can be written as a classical mixture of Bell state products,
\begin{equation} \label{eqn:classical Bell}
\begin{aligned}
&\rho_{A^\prime B^\prime}^{(n)} = \sum_{b_1, b_2, ...,b_n} p_{b_1, b_2,...,b_n} \ket{b_1, b_2,...,b_n}\bra{b_1, b_2,...,b_n}, \\
&\ket{b_1, b_2,...b_n}  = \bigotimes_{i=1}^{n} \ket{\Phi^{(i)}_{b_i}}, \\
\end{aligned}
\end{equation}
where $\ket{\Phi^{(i)}_{b_i}}$ is one of the four Bell states labeled by $b_i \in \{0,1,2,3\}$ on the $i$ th qubit, then one can distill entanglement by employing EDPs. In the one-way hashing method \cite{Bennett1996BDSW}, a specific type of one-way EDP, Alice measures a series of commuting operators based on some random-hashing matrix on her $n$ qubits, and she sends the results to Bob. Bob measures the same operators on his $n$ qubits and infers the locations and types of the errors from the difference in the measurement results. After that, Alice and Bob correct the errors and obtain $m$ ($m\leq n$, almost perfect) EPR pairs.

In general, of course, the initial state $\rho_{A^\prime B^\prime}^{(n)}$ can deviate from Bell-diagonal states and become highly entangled between different pairs. In the Lo-Chau  security proof \cite{Lo1999Unconditional}, it has been shown that, for the one-way hashing EDP introduced above, the error syndrome and EDP performance of such $\rho_{A^\prime B^\prime}^{(n)}$ is the same as that of the state after dephasing between pairs,
\begin{equation}
\begin{aligned}
\rho_{A^\prime B^\prime, dep}^{(n)} &\equiv W \rho_{A^\prime B^\prime}^{(n)} W, \\
W & = \sum_{b_1, b_2, ...,b_n} \ket{b_1, b_2,...,b_n}\bra{b_1, b_2,...,b_n}.
\end{aligned}
\end{equation}
where $\ket{b_1, b_2,...,b_n}$ is defined in Eq.~\eqref{eqn:classical Bell}. Therefore, one can reduce the EDP for general $\rho_{A^\prime B^\prime}^{(n)}$ (i.e., the case with coherent attacks) to the case in Eq.~\eqref{eqn:classical Bell}.

In the Shor-Preskill security proof \cite{Shor2000Simple}, the one-way EDP protocol is reduced to a ``prepare-and-measure'' QKD protocol, by employing the CSS code\cite{CSS1996qec}. Later, other techniques of decoupling $X$-error correction and $Z$-error correction are introduced for this reduction. For a one-way EDP protocol based on the CSS code, the distillation rate of EPR pairs, $r \equiv \lim_{n\rightarrow\infty}m/n$, is given by
\begin{equation} \label{eqn:Shor-Preskill key rate}
r = 1 - H(E^Z) - H(E^X),
\end{equation}
where $E^Z$ and $E^X$ are the $Z$-error rate and $X$-error rate, respectively, and $H(x)=-x\log_2x-(1-x)\log_2(1-x)$ is the binary Shannon entropy function. The error rates can defined as measurement results on $\rho^{(n)}_{A^\prime B^\prime}$,
\begin{equation}
\begin{aligned}
\label{eqn: EX EZ}
E^Z &\equiv Tr[ \bigoplus_{j=1}^{n} \dfrac{1}{2}(1- Z^{(j)}_{A^\prime} \otimes Z^{(j)}_{B^\prime}) \rho^{(n)}_{A^\prime B^\prime}], \\
E^X &\equiv Tr[ \bigoplus_{j=1}^{n} \dfrac{1}{2}(1- X^{(j)}_{A^\prime} \otimes X^{(j)}_{B^\prime}) \rho^{(n)}_{A^\prime B^\prime}], \\
\end{aligned}
\end{equation}
where $X^{(j)}_{A^\prime(B^\prime)}$ is the Pauli $X$ operator and $Z^{(j)}_{A^\prime(B^\prime)}$ is the Pauli $Z$ operator, on the $j$ th pair-of-qubits system $A^\prime (B^\prime)$. Since the bit value is measured in the $Z$-basis and the phase value is measured in the $X$-basis, we also call the $Z$-basis error the ``bit error'' and the $X$-basis error the ``phase error''.

Following the Shor-Preskill security proof, distillable entanglement of the one-way EDP protocol is the key rate for some prepare-and-measure QKD protocols such as BB84 \cite{Bennett1984Quantum}. In the BB84 protocol, we can estimate $E^Z, E^X$ by randomly measuring the qubits in the $X$- and $Z$-basis. In the PM-QKD protocol, on the other hand, Alice and Bob can only measure in the $Z$-basis. In the following sections, we will introduce entanglement-based PM-QKD protocols and discuss how to infer the value of $E^X$.

\subsection{Entanglement-based PM-QKD protocol} \label{Sc:SecureScenario}
We first introduce an entanglement-based PM-QKD protocol, called Protocol I, as shown in Fig.~\ref{fig:Pro1}.

\begin{figure*}[htbp]
\centering
\includegraphics[width=16cm]{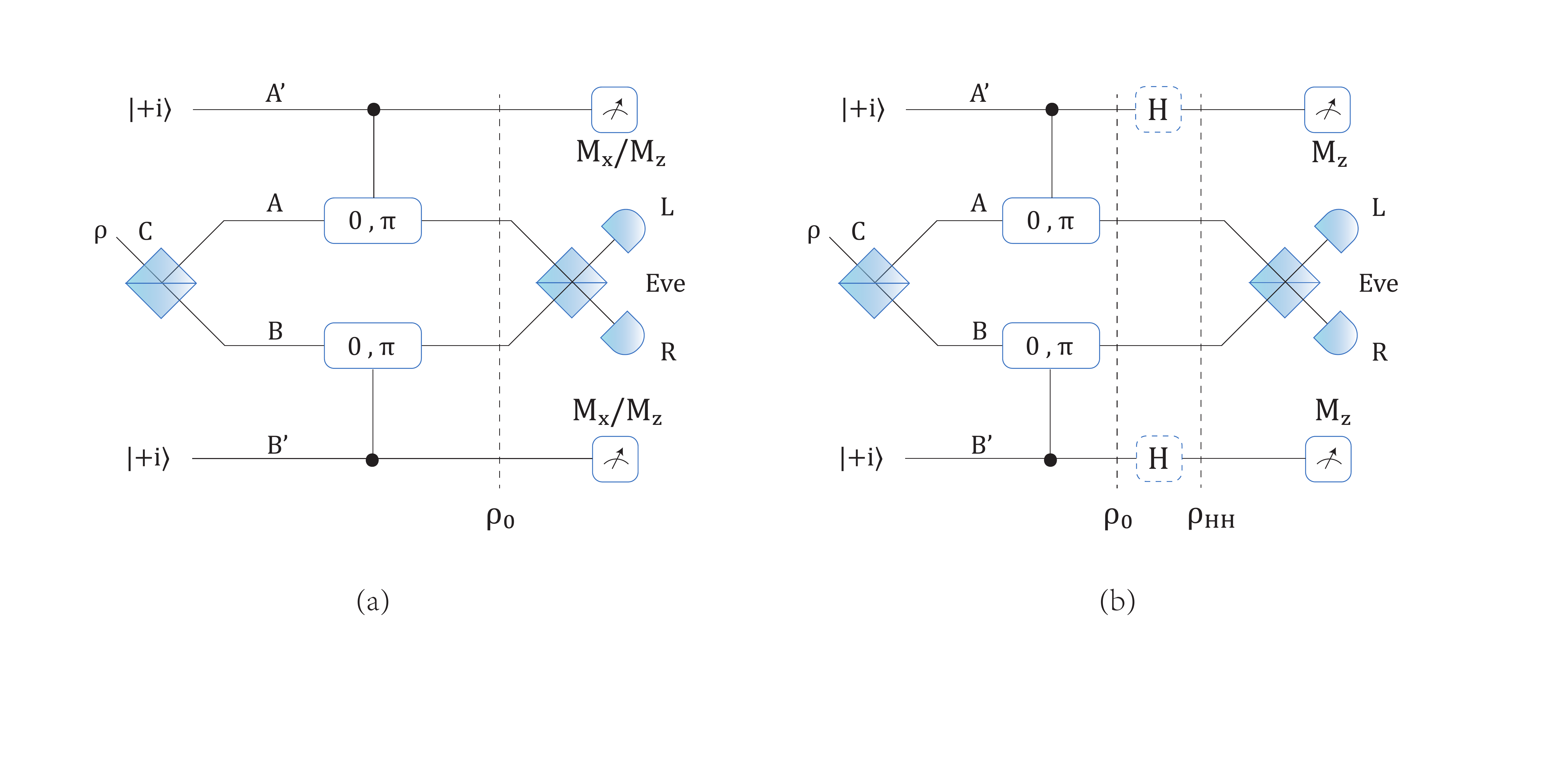}
\caption{(a) Schematic diagram of Protocol I. (b) Equivalent view of Protocol I, where the $X$-basis measurement on $A^\prime$ or $B^\prime$ is realized by a Hadamard gate followed by the $Z$-basis measurement.} \label{fig:Pro1}
\end{figure*}

\textbf{\uline{Protocol I}}
\begin{enumerate}
\item \label{StepPrep}
State preparation: A trusted party, Charlie, picks a state $\rho$ on optical mode $C$, splits $\rho$ into two pulses, $A$ and $B$; and sends them to Alice and Bob, respectively. Alice and Bob initialize their qubits in $\ket{+i}$. Alice applies the control gate $C_{\pi}$, defined in Eq.~\eqref{eq:Cpi}, to qubit $A^\prime$ and optical pulse $A$. Similarly, Bob applies $C_{\pi}$ to $B^\prime$ and $B$.

\item \label{StepMeasure}
Measurement: The two optical pulses $A$ and $B$ are sent to an untrusted party, Eve, who is supposed to perform interference measurement and record which detector ($L$ or $R$) clicks.

\item \label{StepAnnounce}
Announcement: Eve announces the detection result, $L/R$ click or failure, for each round.

\item \label{StepSifting}
Sifting: When Eve announces an $L/R$ click, Alice and Bob keep the qubits of systems $A^\prime$ and $B^\prime$. In addition, Bob applies a Pauli $Y$-gate to his qubit if Eve's announcement is $R$ click.

\item \label{StepParaEst}
Parameter estimation: After many rounds of the above steps, Alice and Bob end up with a joint $2n$-qubit state, denoted by $\rho_{A^\prime B^\prime}^{(n)}$. They then perform random sampling on the remaining $\rho_{A^\prime B^\prime}^{(n)}$ to estimate $E^Z$ and infer $E^X$ by Eq.~\eqref{eqn:EX par}.

\item\label{StepExtraction}
Key distillation: Alice and Bob apply a standard EDP when the error rates are below a certain threshold. The distillation ratio $r$ is given by Eq.~\eqref{eqn:Shor-Preskill key rate}. Once Alice and Bob obtain $nr$ (almost) pure EPR pairs, they both perform local $Z$ measurements on the qubits to generate private keys.
\end{enumerate}

Our first observation is that, if the state $\rho$ prepared by Charlie is a parity state, namely, $\rho\in\mathcal{D}(\mathcal{H}^C_{odd})$ or $\rho\in\mathcal{D}(\mathcal{H}^C_{even})$, then the $X$-error rate and $Z$-error rate are correlated. Here, we denote the $Z$-error rate and $X$-error rate for an odd state $\rho_{odd}$ as $e_{odd}^Z$ and $e_{odd}^X$, respectively, and similarly, denote for an even state $\rho_{even}$ as $e_{even}^Z$ and $e_{even}^X$.

\begin{lemma}\label{Lem:parity}
In Protocol I, for $\rho\in\mathcal{D}(\mathcal{H}^C_{odd})$, $e_{odd}^X = e_{odd}^Z$; and for $\rho\in\mathcal{D}(\mathcal{H}^C_{even})$, $e_{even}^X = 1- e_{even}^Z$.
\end{lemma}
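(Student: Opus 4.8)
The plan is to compute the action of the relevant Pauli operators on the post-measurement two-qubit state directly, exploiting the structure of the control-phase gate $C_\pi$ and the parity of Charlie's source. Let me write $\ket{\psi}_{AB}$ for the (sub-normalized) state of the two optical pulses after Charlie's beam splitter, and recall that $A', B'$ start in $\ket{+i}$, so before the $C_\pi$ gates the joint state is $\tfrac12(\ket{0}+i\ket{1})_{A'}(\ket{0}+i\ket{1})_{B'}\otimes\ket{\psi}_{AB}$. Applying $C_\pi$ on each side entangles the qubit value with a $0$ or $\pi$ phase shift on the corresponding optical mode. The key algebraic fact I would use is that a global $\pi$-phase shift $U_A(\pi)\otimes U_B(\pi)$ acts as $(-1)^{N}$ where $N$ is the total photon number in $A$ and $B$; since Charlie's source has definite parity and the beam splitter preserves total photon number, $U_A(\pi)\otimes U_B(\pi)$ acts as $+1$ on an even source and $-1$ on an odd source. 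This is what collapses the four-term superposition.

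**Main computation.** First I would write out the four branches indexed by $(\kappa_{A'},\kappa_{B'})\in\{0,1\}^2$ and group them by the correlation $\kappa_{A'}\oplus\kappa_{B'}$: the ``diagonal'' branches $00,11$ carry the optical states $\ket{\psi}_{AB}$ and $U_A(\pi)U_B(\pi)\ket{\psi}_{AB}=\pm\ket{\psi}_{AB}$, which are \emph{identical up to the global sign} $\pm=(-1)^{\text{parity}}$; the ``off-diagonal'' branches $01,10$ carry $U_B(\pi)\ket{\psi}_{AB}$ and $U_A(\pi)\ket{\psi}_{AB}$, which differ from each other again by that same global sign. Then I would impose Eve's announcement: an $L$ or $R$ click corresponds to a POVM element $E_{L/R}$ on modes $A,B$, and (after Bob's conditional $Y$-gate on an $R$ click, which I would track through) the post-measurement state for key extraction is obtained by applying $E_{L/R}^{1/2}$ to the optical part and renormalizing. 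Because of the grouping above, the conditional two-qubit state takes the schematic form $\alpha\ket{\Phi}_{A'B'}\otimes\ket{v_1}+\beta\ket{\Psi}_{A'B'}\otimes\ket{v_2}$ where $\ket{\Phi},\ket{\Psi}$ are the Bell states with $Z$-correlation $+1$ and $-1$ respectively, and $\ket{v_1},\ket{v_2}$ are the two (un-normalized) optical states $E^{1/2}\ket{\psi}$ and $E^{1/2}U_A(\pi)\ket{\psi}$. Crucially the \emph{relative} phase between the two Bell components is fixed by the source parity: for an odd source the extra $-1$ flips $\ket{\Phi^+}\leftrightarrow\ket{\Phi^-}$-type structure in exactly the way that makes $\langle X_{A'}X_{B'}\rangle = \langle Z_{A'}Z_{B'}\rangle$, whereas for an even source one gets $\langle X_{A'}X_{B'}\rangle = -\langle Z_{A'}Z_{B'}\rangle$.

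**Extracting the error-rate identities.** With the conditional state in hand, $e^Z$ is read off as the weight on the $Z$-anticorrelated Bell component and $e^X$ as the weight on the $X$-anticorrelated component, i.e. $e^Z=\tfrac12(1-\langle Z_{A'}Z_{B'}\rangle)$ and $e^X=\tfrac12(1-\langle X_{A'}X_{B'}\rangle)$ in the dephased (Bell-diagonal) picture justified by the Lo–Chau argument reviewed in Appendix~\ref{Sc:SecureEDP}. Substituting the two sign relations above gives immediately $e^X_{odd}=e^Z_{odd}$ and $e^X_{even}=1-e^Z_{even}$, which is the claim. I would also note that this argument is per-round and the same sign relation holds regardless of which optical states $\ket{v_1},\ket{v_2}$ Eve's attack produces — only the \emph{parity} of Charlie's state enters — so it lifts without change to the $n$-round state $\rho^{(n)}_{A'B'}$ after dephasing between pairs.

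**Anticipated obstacle.** The routine part is the four-branch bookkeeping; the genuinely delicate point is justifying that Eve's most general measurement (an arbitrary POVM on modes $A,B$, possibly with an ancilla) does not spoil the rigid phase relation between the two Bell components. The reason it cannot is that $U_A(\pi)\otimes U_B(\pi)$, being $(-1)^N$ with $N$ the \emph{conserved} total photon number, commutes with every physical operation Eve can apply to the optical modes (which is photon-number-preserving up to loss to an environment that Eve controls but that still carries a well-defined parity label). Making this commutation argument airtight — in particular handling loss/ancilla modes so that ``global $\pi$-shift $=$ parity sign'' survives — is where I would spend the most care; everything else is substitution.
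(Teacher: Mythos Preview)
Your main computation is essentially the same as the paper's: both exploit that $U_A(\pi)\otimes U_B(\pi)$ acts as $(-1)^N$ on a parity state, so that after the two $C_\pi$ gates the joint state on $A'B'AB$ is supported on $\{\ket{\Phi^+},\ket{\Psi^-}\}$ (odd source) or on $\{\ket{\Phi^-},\ket{\Psi^+}\}$ (even source) tensored with two fixed optical vectors. The paper phrases this as the identity $(H_{A'}\otimes H_{B'})\ket{\Psi_0^{(odd)}}=\ket{\Psi_0^{(odd)}}$ and $(H_{A'}\otimes H_{B'})\ket{\Psi_0^{(even)}}=(I_{A'}\otimes Y_{B'})\ket{\Psi_0^{(even)}}$, from which $e^X=e^Z$ (odd) and $e^X=1-e^Z$ (even) follow immediately; your $\langle X_{A'}X_{B'}\rangle=\pm\langle Z_{A'}Z_{B'}\rangle$ is the same statement in expectation-value form.

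Where you go wrong is in your ``anticipated obstacle.'' You claim the delicate point is that $U_A(\pi)\otimes U_B(\pi)=(-1)^N$ must \emph{commute} with Eve's operation, and you try to justify this by asserting that Eve's map is ``photon-number-preserving up to loss.'' That assertion is false---Eve may perform an arbitrary CPTP map on modes $A,B$, including photon injection---and, more importantly, no such commutation is needed. The parity sign has already done its work \emph{before} Eve touches the state: it fixes the qubit-side Bell structure of $\ket{\Psi_0}$. Eve's subsequent Kraus operator $K$ acts only on the optical registers $A,B$ and therefore commutes trivially with $H_{A'}\otimes H_{B'}$ (or with $Z_{A'}Z_{B'}$ and $X_{A'}X_{B'}$), so the identity $(H\otimes H\otimes I_{AB})\ket{\Psi_0}=\ket{\Psi_0}$ propagates to $(H\otimes H\otimes I)(I\otimes K)\ket{\Psi_0}=(I\otimes K)\ket{\Psi_0}$ without any hypothesis on $K$. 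The same remark covers Bob's conditional $Y$-gate in the sifting step. If you pursue the commutation line you sketched, you will either impose an unwarranted restriction on the adversary or get stuck trying to prove something that is neither true nor required; drop that paragraph and the rest of your argument goes through.
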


\begin{proof}
First consider the case when $\rho$ is a Fock state $\ket{k}_C$, defined in Eq.~\eqref{eq:defFock}. After passing through the BS, as shown in Fig.~\ref{fig:Pro1}, the state on modes $A$ and $B$ becomes
\begin{equation}
\dfrac{1}{\sqrt{2^k k!}} (a^\dag + b^\dag)^k \ket{00}_{AB}.
\end{equation}
The joint state on system $A^\prime, B^\prime, A, B$ before the $C_\pi$ operations is
\begin{equation}
\begin{aligned}
& \dfrac{1}{\sqrt{2^k k!}} \ket{+i+i}_{A^\prime B^\prime} (a^\dag + b^\dag)^k \ket{00}_{AB} \\
& = \dfrac{1}{\sqrt{2^k k!}}\dfrac{1}{2} [(\ket{00} - \ket{11}) + i (\ket{01} + \ket{10}) ]_{A^\prime B^\prime} (a^\dag + b^\dag)^k \ket{00}_{AB}.
\end{aligned}
\end{equation}
After the $C_\pi$ operations, this state becomes

\begin{widetext}
\begin{equation} \label{eqn:rho0 fock}
\begin{aligned}
\ket{\Psi^{(k)}_0} = &
\begin{cases}
&\dfrac{1}{2\sqrt{2^k k!}} [ (\ket{00} + \ket{11} )_{A^\prime B^\prime} (a^\dagger + b^\dagger)^k \ket{00}_{A B} + i(\ket{01} - \ket{10} )_{A^\prime B^\prime} (a^\dagger - b^\dagger)^k \ket{00}_{AB} ], \\
& \text{if } k \text{ is odd}, \\
&\dfrac{1}{2\sqrt{2^k k!}} [ (\ket{00} - \ket{11} )_{A^\prime B^\prime} (a^\dagger + b^\dagger)^k \ket{00}_{A B} + i(\ket{01} + \ket{10} )_{A^\prime B^\prime} (a^\dagger - b^\dagger)^k \ket{00}_{AB} ], \\
& \text{if } k \text{ is even}.
\end{cases} \\
\end{aligned}
\end{equation}
\end{widetext}

As shown in Fig.~\ref{fig:Pro1}(b), the $X$-basis measurement on $A^\prime$ or $B^\prime$ is realized by a Hadamard gate followed by the $Z$-basis measurement. Denote the state after local Hadamard gates as $ \ket{\Psi^{(k)}_{HH}} \equiv (H^{A^\prime}\otimes H^{B^\prime}) \ket{\Psi^{(k)}_0}$; then,

\begin{widetext}
\begin{equation} \label{eqn:rhoHH fock}
\begin{aligned}
\ket{\Psi^{(k)}_{HH}} = &
\begin{cases}
&\dfrac{1}{2\sqrt{2^k k!}} [ (\ket{00} + \ket{11} )_{A^\prime B^\prime} (a^\dagger + b^\dagger)^k \ket{00}_{A B} + i(\ket{01} - \ket{10} )_{A^\prime B^\prime} (a^\dagger - b^\dagger)^k \ket{00}_{AB} ], \\
& \text{if } k \text{ is odd}, \\
&\dfrac{1}{2\sqrt{2^k k!}} [ (\ket{01} + \ket{10} )_{A^\prime B^\prime} (a^\dagger + b^\dagger)^k \ket{00}_{A B} + i(\ket{00} - \ket{11} )_{A^\prime B^\prime} (a^\dagger - b^\dagger)^k \ket{00}_{AB} ], \\
& \text{if } k \text{ is even}.
\end{cases} \\
\end{aligned}
\end{equation}
\end{widetext}

In other words, the $X$-error rate $e^X_k$ can be understood as the error rate by performing the $Z$-basis measurement on the state of $\ket{\Psi^{(k)}_{HH}}$. The relation between the $X$- and $Z$-error rates can be obtained by comparing Eqs.~\eqref{eqn:rho0 fock} and \eqref{eqn:rhoHH fock}. For the odd-photon-number case, since $\ket{\Psi^{(2k+1)}_{HH}} = \ket{\Psi^{(2k+1)}_0}$, we have $e^X_{2k+1} = e^Z_{2k+1}$. For the even-photon-number case, since $\ket{\Psi^{(2k)}_{HH}} = I^{A^\prime}\otimes Y^{B^\prime}\ket{\Psi^{(2k)}_0}$, we have $e^X_{2k} = 1 - e^Z_{2k}$.


Now, let us consider the case of pure parity states. For an odd state $\ket{\psi_{odd}}_C = \sum c_{2k+1} \ket{{2k+1}}_C $, where $\sum_k |c_{2k+1}|^2 = 1$, since the BS and $C_\pi$ are unitary operations, the state after these operations can be written as
\begin{equation}
\label{eq:pure odd psi}
\ket{\Psi_0^{(odd)}} = \sum_{k} c_{2k+1} \ket{\Psi_0^{(2k+1)}}.
\end{equation}
After local Hadamard gates, the state becomes $ \ket{\Psi^{(odd)}_{HH}} \equiv (H^{A^\prime}\otimes H^{B^\prime}) \ket{\Psi^{(odd)}_0} $. From Eqs.~\eqref{eqn:rho0 fock}, \eqref{eqn:rhoHH fock}, \eqref{eq:pure odd psi}, we can see that $\ket{\Psi^{(odd)}_{HH}} = \ket{\Psi^{(odd)}_0}$, and hence $e^Z_{odd} = e^X_{odd}$. With the same argument, we have $e^Z_{even} = 1 - e^X_{even}$.

For general parity states, we can regard them as mixtures of pure parity states,
\begin{equation}
\begin{aligned}
\rho_{odd} &= \sum_{i} p_i \ket{\psi_{odd}^{(i)}}\bra{\psi_{odd}^{(i)}}, \\
\rho_{even} &= \sum_{i} p_i \ket{\psi_{even}^{(i)}}\bra{\psi_{even}^{(i)}}.
\end{aligned}
\end{equation}
This is equivalent to Charlie sending out $\ket{\psi_{odd(even)}^{(i)}}$ with probability $p_i$. For each pure state component, we have $e^X_{odd(i)} = e^Z_{odd(i)}$ and $e^X_{even(i)} = 1- e^Z_{even(i)}$. Thus, the relations hold for all (mixed) parity states.
\end{proof}

In general, Charlie might not use a parity state. Consider the case that Charlie performs the parity measurement $\{M_{odd},M_{even}\}$, defined in Eq.~\eqref{eqn:Mpar}, before sending to Alice and Bob. Denote the measurement outcome probabilities for the odd and even parity to be $p_{odd}$ and $p_{even}$, respectively. This is equivalent to Charlie preparing an odd state $\rho_{odd}$ and an even state $\rho_{even}$ with probabilities $p_{odd}$ and $p_{even}$, respectively. Then, the state can be written as
\begin{equation}
\rho = p_{odd} \rho_{odd} + p_{even} \rho_{even},
\end{equation}
that is, $\rho\in \mathcal{D}(\mathcal{H}_{odd}^C \oplus \mathcal{H}_{even}^C)$.


Suppose Charlie announces the parity information publicly to Alice and Bob.
Then, they can label the sifted qubits with ``odd" and ``even". Denote $q_{odd}$ and $q_{even}$, with $q_{odd}+q_{even}=1$, to be the fractions of odd- and even-labeled states in the sifted $n$-pairs of qubit states $A^\prime, B^\prime$, respectively. Then, according to Lemma \ref{Lem:parity}, the total $X$-error rate can be calculated by
\begin{equation} \label{eqn:EX par}
\begin{aligned}
E^X &= q_{odd} e^X_{odd} + q_{even} e^X_{even} \\
&= q_{odd} e^Z_{odd} + q_{even} (1 - e^Z_{even}). \\
\end{aligned}
\end{equation}
Here, the parameters $q_{odd}$, $q_{even}$, $e^Z_{odd}$, and $e^Z_{even}$ can be evaluated according to Charlie's parity announcement. When the parity information is missing, Alice and Bob need to estimate these parameters, which will be described later in Appendix \ref{Sc:decoy}.

\subsection{Coherent state protocol and equivalent process} \label{Sc:SecureCoherent}

From Protocol I to the PM-QKD protocol, there are a few practical issues that need to be addressed. First, the trusted party Charlie and the BS should be removed. In order to do this, we consider a special case where Charlie prepares a coherent state $\ket{\sqrt{\mu}}_C$ as the photon source. In addition, Charlie adds a random phase $\phi\in\{0,\pi\}$ with equal probabilities to the coherent state, so the density matrix can be written as
\begin{equation} \label{eqn:coherent state parity addition}
\begin{aligned}
\rho(\alpha, -\alpha) &= \dfrac{1}{2}(\ket{\alpha}\bra{\alpha} + \ket{-\alpha}\bra{-\alpha}) \\
&= c_{odd}\ket{\alpha_{odd}}\bra{\alpha_{odd}} + c_{even}\ket{\alpha_{even}}\bra{\alpha_{even}}, \\
\end{aligned}
\end{equation}
where $\alpha=\sqrt{\mu}$, and $\ket{\alpha_{odd}}$, $\ket{\alpha_{even}}$, $c_{odd}$ and $c_{even}$ are defined in Eqs.~\eqref{eqn:coherent state parity} and Eq.~\eqref{eqn: c normal}. Clearly, one can see that $\rho(\alpha, -\alpha) \in \mathcal{D}(\mathcal{H}_{odd}^C \oplus \mathcal{H}_{even}^C)$.


\begin{figure*}[htbp]
\centering
\includegraphics[width=16cm]{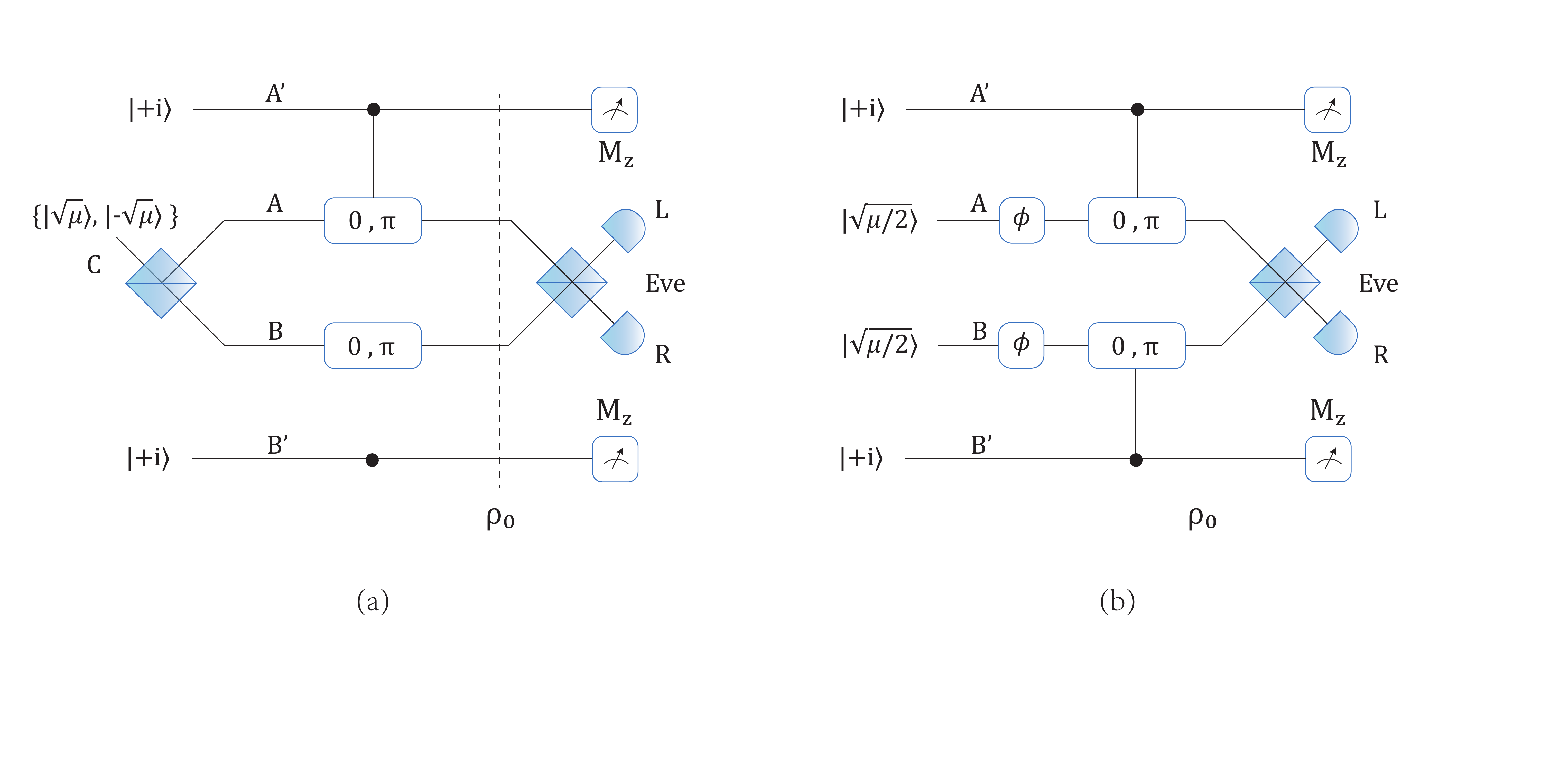}
\caption{(a) A specific realization of Protocol I, where Charlie prepares $\ket{\sqrt{\mu}}$ and $\ket{\sqrt{-\mu}}$ with equal probabilities. (b) Schematic diagram of Protocol II, where Alice and Bob prepare $\ket{\sqrt{\mu/2}}$ and add the same random phase $\phi\in\{0,\pi\}$.} \label{fig:Pro2}
\end{figure*}

Now, Alice and Bob want to prepare the state given in Eq.~\eqref{eqn:coherent state parity addition} without Charlie's assistance. They first locally prepare two coherent states $\ket{\sqrt{\mu/2}}_{A(B)}$ and add the same random phase $\phi\in\{0,\pi\}$ to their states, as shown in Fig.~\ref{fig:Pro2}(b). Then, the state becomes
\begin{widetext}
\begin{equation}
\label{eqn:two coherent state parity addition}
\begin{aligned}
\rho_{AB}(\sqrt{\mu},-\sqrt{\mu}) =  \dfrac{1}{2}(\ket{\sqrt{\mu/2}}_A\bra{\sqrt{\mu/2}} \otimes \ket{\sqrt{\mu/2}}_B\bra{\sqrt{\mu/2}}
 + \ket{\sqrt{-\mu/2}}_A\bra{\sqrt{-\mu/2}} \otimes \ket{\sqrt{-\mu/2}}_B\bra{\sqrt{-\mu/2}}).
\end{aligned}
\end{equation}
\end{widetext}
Apparently, $\rho_{AB}(\sqrt{\mu},-\sqrt{\mu})$ is the state after $\rho(\alpha, -\alpha)$ defined in Eq.~\eqref{eqn:coherent state parity addition} going through the BS. Thus, this new protocol (Protocol II) is equivalent to Protocol I with the input state $\rho(\sqrt{\mu},-\sqrt{\mu})$.

Protocol II runs as follows, as shown in Fig.~\ref{fig:Pro2}(b). Here, $\mu_a = \mu_b = \mu/2$.

\textbf{\uline{Protocol II}}
\begin{enumerate}
\item 
State preparation:
Alice and Bob prepare coherent states $\ket{\sqrt{\mu_a}}$ and $\ket{\sqrt{\mu_b}}$ on optical modes $A$ and $B$, respectively. They initialize the qubits $A^\prime$ and $B^\prime$ in $\ket{+i}$. They add the same random phase $\phi\in\{0,\pi\}$ on the optical modes $A$ and $B$. Alice applies the control gate $C_{\pi}$, defined in Eq.~\eqref{eq:Cpi}, to qubit $A^\prime$ and optical pulse $A$. Similarly, Bob applies $C_{\pi}$ to $B^\prime$ and $B$.

\item 
Measurement: The two optical pulses $A$ and $B$ are sent to an untrusted party, Eve, who is supposed to perform interference measurement and record which detector ($L$ or $R$) clicks.

\item 
Announcement: Eve announces the detection result, $L/R$ click or failure, for each round.

\item 
Sifting: When Eve announces an $L/R$ click, Alice and Bob keep the qubits of systems $A^\prime$ and $B^\prime$. In addition, Bob applies a Pauli $Y$-gate to his qubit if Eve's announcement is an $R$ click.

\item 
Parameter estimation: After many rounds of the above steps, Alice and Bob end up with a joint $2n$-qubit state, denoted by $\rho_{A^\prime B^\prime}^{(n)}$. They then perform random sampling on the remaining $\rho_{A^\prime B^\prime}^{(n)}$ to estimate $E^Z$ and infer $E^X$ by Eq.~\eqref{eqn:EX par}.

\item 
Key distillation: Alice and Bob apply a standard EDP when the error rates are below a certain threshold. The distillation ratio $r$ is given by Eq.~\eqref{eqn:Shor-Preskill key rate}. Once Alice and Bob obtain $nr$ (almost) pure EPR pairs, they both perform local $Z$ measurements on the qubits to generate private keys.
\end{enumerate}

From Protocol II to the PM-QKD protocol, there are still some missing links.
\begin{enumerate}
\item
Alice and Bob need to add the same random phase $\phi\in\{0,\pi\}$ to their states, which would cost them a private bit.

\item
They use Eq.~\eqref{eqn:EX par} to infer $E^X$, which needs information on the parameters of $ q_{odd}, q_{even}, e^Z_{odd}$ and $e^Z_{even}$.

\item
The phase references of Alice and Bob's coherent states are locked. That is, the phases of the initial coherent states, $\ket{\sqrt{\mu_a}}$ and $\ket{\sqrt{\mu_b}}$, are the same, and hence remote phase locking is required.
\end{enumerate}
We shall remove the simultaneous phase randomization requirement in Appendix \ref{Sc:SecureAnnounce}, bound $q_{odd}, q_{even}, e^Z_{odd}$ and $e^Z_{even}$ in Appendix \ref{Sc:decoy}, and remove the phase-locking requirement in Appendix \ref{Sc:SecurePost}.

\subsection{Security with phase announcement}\label{Sc:SecureAnnounce}
To remove the simultaneous phase randomization requirement in Protocol II, a straightforward idea is that Alice and Bob add random phases $\phi_{a(b)}\in\{0,\pi\}$ independently (with equal probability $1/2$); during the sifting step, they announce the phases $\phi_a, \phi_b$ and postselect the bits only for $\phi_a=\phi_b$, as shown in Fig.~\ref{fig:Pro3}. All the other steps remain unchanged. Here, there are two modifications that need to be analyzed, the phase announcement of $\phi_a, \phi_b$ and postselection of $\phi_a=\phi_b$.

We deal with the phase announcement of $\phi_a, \phi_b$ first. Consider Protocol IIa, as shown in Fig.~\ref{fig:Pro3}, where Alice and Bob announce the random phase $\phi$ in Protocol II during postprocessing.


\begin{figure*}[htbp]
\centering
\includegraphics[width=18cm]{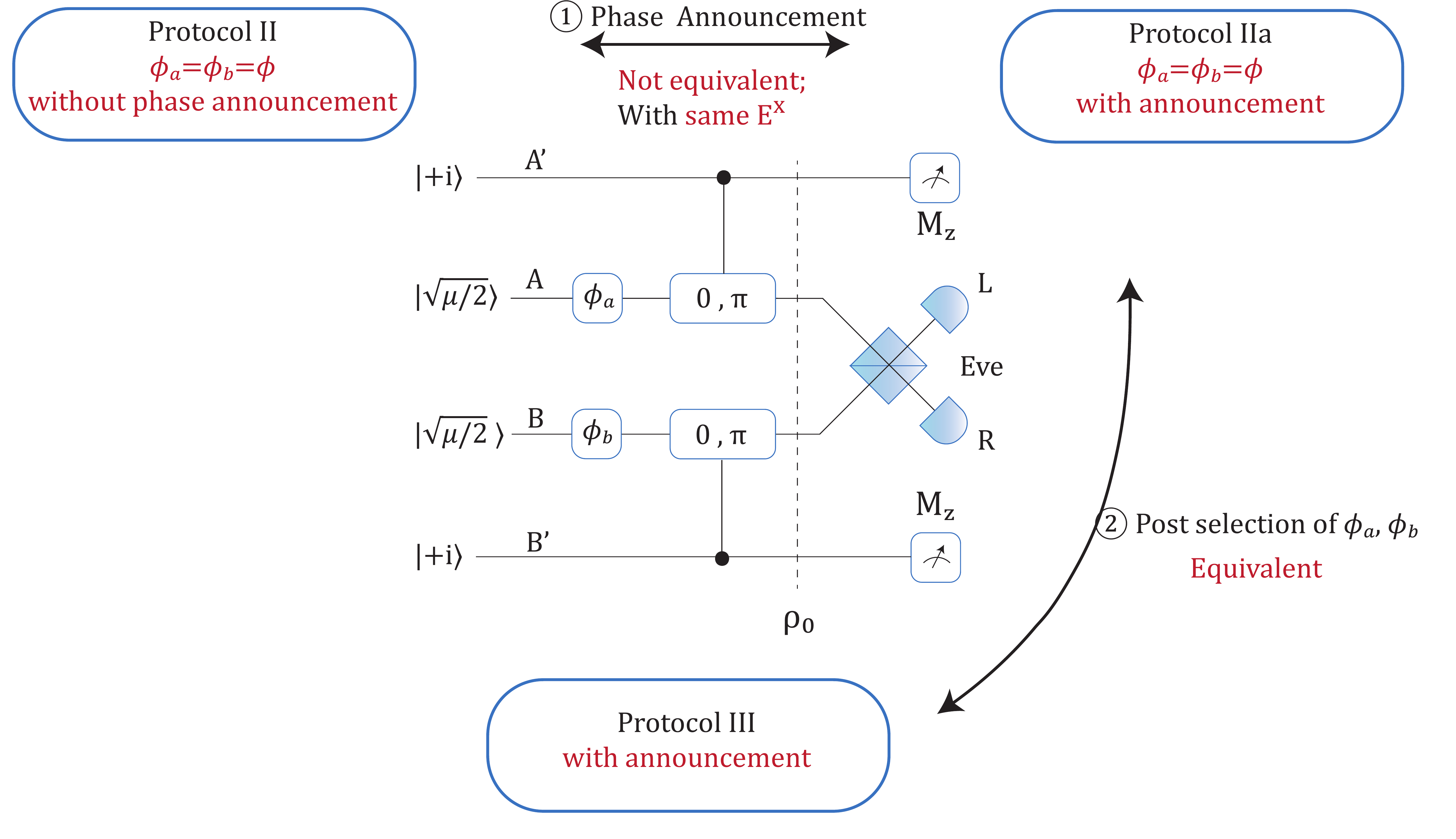}
\caption{The schematic diagram of protocols II, IIa, and III. In Protocol II, Alice and Bob add the same phase $\phi\in\{0,\pi\}$ on their coherent state $\sqrt{\mu/2}$. In Protocol IIa, Alice and Bob announce the phase $\phi$ after Eve's announcement. In Protocol III, Alice and Bob add phases $\phi_a$ and $\phi_b$, independently. They announce $\phi_a$ and $\phi_b$ after Eve's announcement.}

\label{fig:Pro3}
\end{figure*}

\textbf{\uline{Protocol IIa}}
\begin{enumerate}
\item 
State preparation:
Same as Protocol II.

\item 
Measurement: Same as Protocol II.

\item 
Announcement: Same as Protocol II.

\item 
Sifting: The first part is the same as Protocol II. After that, Alice and Bob announce the random phase $\phi$.

\item 
Parameter estimation: Same as Protocol II.

\item 
Key distillation: Same as Protocol II.
\end{enumerate}

Note that the only difference between Protocol II and IIa is that Alice and Bob announce their phase $\phi$ after Eve's announcement. Since the classical information announced during postprocessing is different, according to Definition~\ref{def:equiv}, Protocol II and IIa are not equivalent. In fact, from some specific attacks (see Appendix~\ref{sc:attack}), one can see that the security of the two protocols with and without phase announcement can be very different.

In Protocol II, as shown in Fig.~\ref{fig:Pro3}, because of phase randomization, one can always assume that Alice and Bob (or Eve) perform the total parity measurement $\{M_{odd}^t, M_{even}^t\}$, which is defined as, similar to Eq.~\eqref{eqn:Mpar},
\begin{equation}
\begin{aligned}
\label{eqn:MeasureTotalParity}
M_{odd}^t &\equiv \sum_{k_1 + k_2 \text{ is odd}} \ket{k_1 k_2}_{AB}\bra{k_1 k_2}, \\
M_{even}^t &\equiv \sum_{k_1 + k_2 \text{ is even}} \ket{k_1 k_2}_{AB}\bra{k_1 k_2}. \\
\end{aligned}
\end{equation}
Here, $\{\ket{k_1 k_2}_{AB}\}$ are the Fock states on mode $A,B$, with $k_1$ photons on $A$ and $k_2$ photons on $B$. That is, in Protocol II, the photon source can be regarded as a mixture of odd states and even states. The parity-state channel model is similar to the photon number channel used in the security proof of the decoy-state method \cite{Lo2005Decoy,Ma2008PhD}.

In Protocol IIa, as shown in Fig.~\ref{fig:Pro3}, on the other hand, no such parity measurement is allowed because it does not commute with the phase announcement. That is, Eve can distinguish whether or not Alice and Bob perform $\{M_{odd}^t, M_{even}^t\}$ after the announcement of phases $\phi_a$ and $\phi_b$. In other words, the quantum signals sent by Alice and Bob can no longer be regarded as a mixture of $\rho_{odd}$ and $\rho_{even}$ in Protocol IIa. To analyze the security of Protocol IIa, we notice the following observation.

\begin{observation}\label{obs:same rho}
The joint states $\rho_{A^\prime B^\prime}^{(n)}$ obtained in protocols I, II, and IIa are the same after sifting in Step \ref{StepSifting}.
\end{observation}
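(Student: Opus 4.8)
The plan is to establish the equality of $\rho_{A^\prime B^\prime}^{(n)}$ by chaining two pairwise comparisons: Protocol~I, with Charlie's input specialised to the state $\rho(\sqrt{\mu},-\sqrt{\mu})$ of Eq.~\eqref{eqn:coherent state parity addition}, versus Protocol~II; and then Protocol~II versus Protocol~IIa. For the first link I would simply appeal to Definition~\ref{def:equiv} together with the identity already noted in Appendix~\ref{Sc:SecureCoherent}: the two-mode state $\rho_{AB}(\sqrt{\mu},-\sqrt{\mu})$ that Alice and Bob jointly prepare in Protocol~II (Eq.~\eqref{eqn:two coherent state parity addition}) is exactly the state produced by sending Charlie's $\rho(\sqrt{\mu},-\sqrt{\mu})$ through the beam splitter. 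Since the subsequent $C_\pi$ gates act identically on the pairs $(A^\prime,A)$ and $(B^\prime,B)$, the optical systems delivered to Eve are identical; Eve's measurement, her announcement, the $L/R$ sifting, and Bob's $Y$ correction are the very same steps; and the parameter estimation and EDP are unchanged. All four criteria of Definition~\ref{def:equiv} hold, so the post-sifting states $\rho_{A^\prime B^\prime}^{(n)}$ in Protocols~I and~II coincide.

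For the second link, the only change from Protocol~II to Protocol~IIa is that, during the sifting step, Alice and Bob additionally broadcast the common random phase $\phi\in\{0,\pi\}$, \emph{after} Eve's measurement and announcement. I would argue in three short steps: (i) because the broadcast is strictly later than any operation Eve performs, the quantum states travelling through the channel and the statistics of Eve's outcome are untouched; (ii) the rounds retained by sifting are still exactly those in which Eve declares an $L/R$ click --- there is a single common $\phi$, hence no phase postselection, so the surviving sub-ensemble is unchanged; (iii) the broadcast of $\phi$ is a channel acting only on registers disjoint from $A^\prime$ and $B^\prime$ (it copies the classical value already held by Alice and Bob into Eve's, and the public, record), and therefore leaves the marginal state $\rho_{A^\prime B^\prime}^{(n)}$ invariant. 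Composing the two links yields the claim.

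The delicate point, I expect, is step (iii): one has to make explicit \emph{why} revealing $\phi$ cannot feed back into the key qubits, and this hinges on the round structure of Protocol~IIa having no further Eve interaction after the phase announcement --- the optical modes $A,B$ have already been consumed by Eve's measurement. Were Eve allowed a second, $\phi$-dependent step, the reduced state could change; this is precisely why Protocol~III, which postselects on $\phi_a=\phi_b$, must be treated separately rather than folded into this observation. It is worth stressing in the write-up that the statement concerns the \emph{state} $\rho_{A^\prime B^\prime}^{(n)}$ alone: Protocol~IIa leaks strictly more classical information to Eve than Protocol~II (so the parity-channel model for Eve no longer applies there), yet the error rates $E^Z$ and $E^X$, being functionals of $\rho_{A^\prime B^\prime}^{(n)}$ only via Eq.~\eqref{eqn: EX EZ}, are identical --- which is exactly what legitimises computing them in the convenient Protocol~II picture and reusing them in Protocol~IIa.
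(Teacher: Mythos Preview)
Your proposal is correct and follows essentially the same line as the paper's own justification. The paper's argument is terser---it simply notes that the qubits $A^\prime,B^\prime$ are local and never sent to Eve, that before sifting they are i.i.d.\ and identical across the three protocols, and that because Eve's announcement in Protocol~IIa precedes the phase broadcast her strategy cannot depend on $\phi$; your two-link decomposition (I$\leftrightarrow$II via Definition~\ref{def:equiv}, then II$\leftrightarrow$IIa via the timing of the broadcast) unpacks exactly the same reasoning with more explicit bookkeeping.
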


Note that qubit systems $A^\prime$ and $B^\prime$ are local and never sent to Eve. Before the sifting step, these qubits are these qubits are identical and sampled independently (i.i.d.) and the same for protocols I, II, and IIa. The state $\rho_{A^\prime B^\prime}^{(n)}$ is postselected by Eve's announcement. Of course, Eve can manipulate $\rho_{A^\prime B^\prime}^{(n)}$ by different measurement or announcement strategies. We emphasize that \emph{Eve announces before the phase announcement in Protocol IIa}. Then, her strategy cannot depend on the phase announcement. Therefore, in all three protocols, the state $\rho_{A^\prime B^\prime}^{(n)}$ remains the same. This is crucial to our security analysis. From Observation \ref{obs:same rho}, we can have the following Corollary.

\begin{corollary}\label{Cor:same EX}
Both the $X$- and $Z$-error patterns of the joint states $\rho_{A^\prime B^\prime}^{(n)}$ in protocol II and IIa are the same. Then, the $X$-error rates $E^X$ of protocols II and IIa are the same, as given by Eq.~\eqref{eqn:EX par}.
\end{corollary}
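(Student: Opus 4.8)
The plan is to derive Corollary~\ref{Cor:same EX} almost directly from Observation~\ref{obs:same rho}, with the only real work being to keep the logical order straight about \emph{where} the parity-measurement argument is applied.

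First I would note that every quantity entering the distillation-rate formula~\eqref{eqn:Shor-Preskill key rate} --- not merely the averaged rates $E^Z, E^X$ of Eq.~\eqref{eqn: EX EZ} but the full joint distribution of $X$- and $Z$-error locations across the $n$ pairs (the ``error pattern'', i.e.\ the syndrome) --- is a functional of the sifted state $\rho_{A^\prime B^\prime}^{(n)}$ alone. Indeed, in the Lo--Chau reduction the performance of the one-way hashing EDP depends only on the dephased state $W\rho_{A^\prime B^\prime}^{(n)}W$, which is itself determined by $\rho_{A^\prime B^\prime}^{(n)}$, and $E^Z, E^X$ are by definition expectation values of fixed operators on $\rho_{A^\prime B^\prime}^{(n)}$. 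The qubit registers $A^\prime, B^\prime$ never leave Alice and Bob, and --- this is exactly the content of Observation~\ref{obs:same rho} --- in Protocol~IIa Eve must commit to her measurement on modes $A,B$ and to the $L/R$ announcement \emph{before} $\phi$ is revealed, so her conditioning cannot depend on $\phi$ and the post-sifting state $\rho_{A^\prime B^\prime}^{(n)}$ is literally the same operator in Protocols~II and~IIa. Consequently both the $X$- and $Z$-error patterns, and in particular the numbers $E^X$ and $E^Z$, agree between the two protocols.

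Next I would pin down the value of $E^X$ by working in Protocol~II, where the parity argument is legitimate. There the global phase $\phi\in\{0,\pi\}$ is added by both parties and never announced, so one may freely insert the total parity measurement $\{M_{odd}^t, M_{even}^t\}$ of Eq.~\eqref{eqn:MeasureTotalParity} on modes $A,B$ immediately after state preparation without altering any statistics: the $\phi$-averaged two-mode state $\rho_{AB}(\sqrt{\mu},-\sqrt{\mu})$ is block-diagonal in total photon-number parity, and $\{M_{odd}^t, M_{even}^t\}$ commutes with everything the detectors and Eve subsequently do to $A,B$. Hence the source in Protocol~II is a genuine mixture of odd and even parity states (equivalently a parity state on $C$ before the beam splitter), Lemma~\ref{Lem:parity} applies to each component, and summing over the odd/even fractions $q_{odd}, q_{even}$ yields exactly Eq.~\eqref{eqn:EX par}, $E^X = q_{odd} e^Z_{odd} + q_{even}(1-e^Z_{even})$.

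Combining the two steps: $E^X$ in Protocol~IIa equals $E^X$ in Protocol~II (first step), which equals the right-hand side of Eq.~\eqref{eqn:EX par} (second step). I expect the main subtlety to be precisely this point of order rather than any computation: the parity-measurement insertion is \emph{not} valid in Protocol~IIa, since $\{M_{odd}^t, M_{even}^t\}$ fails to commute with the phase announcement there, so one must establish the formula inside Protocol~II and \emph{transport its value} to Protocol~IIa through the equality of the states $\rho_{A^\prime B^\prime}^{(n)}$, never by re-running the parity argument in~IIa. Once this is spelled out, the corollary follows with no further work.
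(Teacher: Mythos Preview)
Your proposal is correct and follows precisely the route the paper intends: the paper gives no explicit proof of the corollary, merely stating that it follows from Observation~\ref{obs:same rho}, and your argument unpacks exactly that implication---error patterns are functionals of $\rho_{A'B'}^{(n)}$, the observation equates the states, and Eq.~\eqref{eqn:EX par} is established in Protocol~II via Lemma~\ref{Lem:parity} where the parity decomposition is legitimate. Your emphasis on the logical order (parity argument in~II, then transport to~IIa) is the correct reading of the paper's surrounding discussion about the non-commutativity of $\{M_{odd}^t,M_{even}^t\}$ with the phase announcement.
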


Now, we deal with the postselection of $\phi_a=\phi_b$. Regardless of the values of $\phi_a, \phi_b$, because of control-phase gates $C_\pi$, the state $\rho_{AB}$ sent to Eve is
\begin{equation}
\rho_{AB} = \rho_A(\sqrt{\mu/2}, -\sqrt{\mu/2}) \otimes \rho_B(\sqrt{\mu/2}, -\sqrt{\mu/2}),
\end{equation}
where $\rho(\sqrt{\mu/2}, -\sqrt{\mu/2})$ is defined in Eq.~\eqref{eqn:coherent state parity addition}.
The state $\rho_{AB}$ is independent of $\phi_a, \phi_b$; hence Eve's attack cannot depend on $\phi_a, \phi_b$, and the sifted qubits are also independent of the value of $\phi_a (=\phi_b)$.

Furthermore, we notice that discarded qubits with $|\phi_a-\phi_b|=\pi$ can also be used for entanglement distillation. Here, adding a phase $\pi$ to system $B$ is equivalent to performing a Pauli $Y$-gate to system $B^\prime$ for a parity-state source in protocols I, II and IIa. Thus, for the qubits with $|\phi_a-\phi_b|=\pi$, Bob performs $Y$-gate on qubit $B^\prime$.

Therefore, if Alice and Bob randomize their phases $\phi_a, \phi_b\in \{0,\pi\}$ independently and perform phase-sifting operations after Eve's announcement, shown in Fig.~\ref{fig:Pro3}, the modified protocol is equivalent to Protocol IIa. We call this Protocol III, which runs as follows, as shown in Fig.~\ref{fig:Pro3}. Here, $\mu_a = \mu_b = \mu/2$.

\textbf{\uline{Protocol III}}
\begin{enumerate}
\item 
State preparation:
Alice and Bob prepare coherent states $\ket{\sqrt{\mu_a}}$ and $\ket{\sqrt{\mu_b}}$ on optical modes $A, B$, separately. They initial their qubits $A^\prime, B^\prime$ in $\ket{+i}$. They independently add random phases $\phi_a, \phi_b \in \{0,\pi\}$ on optical modes $A, B$. Alice applies the control gate $C_{\pi}$, defined in Eq.~\eqref{eq:Cpi}, to qubit $A^\prime$ and optical pulse $A$. Similarly, Bob applies $C_{\pi}$ to $B^\prime$ and $B$.

\item 
Measurement: The two optical pulses, $A$ and $B$, are sent to an untrusted party, Eve, who is supposed to perform an interference measurement and record which detector ($L$ or $R$) clicks.

\item 
Announcement: Eve announces the detection result, $L$/$R$ click or failure, for each round.

\item 
Sifting: When Eve announces an $L/R$ click, Alice and Bob keep the qubits of systems $A^\prime$ and $B^\prime$. Bob applies a Pauli $Y$-gate to his qubit if Eve's announcement is $R$ click. After Eve's announcement, Alice and Bob announce their encoded phase $\phi_a, \phi_b$. Bob applies a Pauli $Y$-gate to his qubit if $|\phi_a-\phi_b|=\pi$.

\item 
Parameter estimation: After many rounds of the above steps, Alice and Bob end up with a joint $2n$-qubit state, denoted by $\rho_{A^\prime B^\prime}^{(n)}$. They then perform random sampling on the remaining $\rho_{A^\prime B^\prime}^{(n)}$ to estimate $E^Z$ and infer $E^X$ by Eq.~\eqref{eqn:EX par}.

\item 
Key distillation: Alice and Bob apply a standard EDP when the error rates are below a certain threshold. The distillation ratio $r$ is given by Eq.~\eqref{eqn:Shor-Preskill key rate}. Once Alice and Bob obtain $nr$ (almost) pure EPR pairs, they both perform local $Z$ measurements on the qubits to generate private keys.
\end{enumerate}

\subsection{Decoy-state method and phase randomization} \label{Sc:decoy}
Here, we introduce a method to estimate $q_{odd}$, $q_{even}$, $e_{odd}^Z$ and $e_{even}^Z$. Without loss of generality, we mainly discuss the decoy-state method in Protocol I. Similar arguments can be applied to protocols II, IIa, and III.

Recall that in Protocol I, if Charlie prepares coherent state $\ket{\sqrt{\mu}}$ and adds a random phase $\{0,\pi\}$ on it, it is equivalent to preparing odd- and even-parity states with probabilities $p_{odd}^\mu = c_{odd}$ and $p_{even}^\mu = c_{even}$, respectively, as defined in Eq.~\eqref{eqn: c normal}. Define the yield $Y_{odd}^\mu$ ($Y_{even}^\mu$) as the probability of successful detection conditional on the odd-parity (even-parity) state. The fraction of odd- and even-parity states in the final detected signal is given by
\begin{equation}
\begin{aligned}
\label{eqn:q parity}
q_{odd}^\mu &= p_{odd}^\mu \dfrac{Y_{odd}^\mu}{Q_\mu}, \\
q_{even}^\mu &= p_{even}^\mu \dfrac{Y_{even}^\mu}{Q_\mu},
\end{aligned}
\end{equation}
where $Q_\mu$ is the total gain of the signals. For signals with intensity $\mu$, we have
\begin{equation}
\begin{aligned}
\label{eqn:decoy parity}
Q^{\mu} &= p_{odd}^{\mu} Y^{\mu}_{odd} + p_{even}^{\mu} Y^{\mu}_{even}, \\
E^{Z,\mu} Q^{\mu} &= e_{odd}^{Z,\mu}p_{odd}^{\mu} Y^{\mu}_{odd} + e_{even}^{Z,\mu}p_{even}^{\mu} Y^{\mu}_{even}, \\
\end{aligned}
\end{equation}
where $E^Z_\mu$, $e^{Z,\mu}_{add}$ ($e^{Z,\mu}_{even}$) are the quantum bit error rate (QBER) and the $Z$-error rate of odd-state (even-state) signals with total a intensity of $\mu$, respectively.

If we directly estimate $q_{odd}^\mu$, $q_{even}^\mu$, $e_{odd}^{Z,\mu}$ and $e_{even}^{Z,\mu}$ from Eq.~\eqref{eqn:q parity}, Eq.~\eqref{eqn:decoy parity}, and the constriant that
\begin{equation}
q_{odd}^\mu, q_{even}^\mu, e_{odd}^{Z,\mu}, e_{even}^{Z,\mu}, Y_{odd}^\mu, Y_{even}^\mu \in [0,1],
\end{equation}
then the estimation is too loose to bound the $X$-error $E^X$ in Eq.~\eqref{eqn:EX par}.

Now, we introduce a more efficient method to estimate $q_{odd}$, $q_{even}$, $e_{odd}^Z$ and $e_{even}^Z$. Essentially, we employ the idea of the decoy-state method \cite{Lo2005Decoy}. That is, in Protocol I, Charlie adjusts the intensity $\mu$ of his prepared coherent lights. After Eve's announcement, Charlie announces the value of $\mu$. 
Furthermore, Charlie randomizes the phase $\phi$ on state $\ket{\sqrt{\mu}e^{i\phi}}$ continuously from $[0,2\pi)$. In this case, the state prepared by Charlie can be written as
\begin{equation} \label{eqn:phase random}
\dfrac{1}{2\pi}\int_0^{2\pi} d\phi \ket{\sqrt{\mu} e^{i\phi}} \bra{\sqrt{\mu} e^{i\phi}} = \sum_{k=0}^{\infty} P(k) \ket{k}\bra{k}.
\end{equation}
That is, in Protocol I, if Charlie prepares $\ket{\sqrt{\mu} e^{i\phi}}$ with random phase $\phi$, this is equivalent to preparing the Fock states $\{\ket{k}\}$ with probability $P(k)$. Obviously, Fock states $\{\ket{k}\}$ are parity states. Then, by directly applying Lemma \ref{Lem:parity}, we can estimate the $X$-error by
\begin{equation} \label{eqn:EX Fock}
\begin{aligned}
E^X &= \sum_{k=0}^{\infty} q_{2k+1} e^Z_{2k+1} + \sum_{k=0}^{\infty} q_{2k} (1 - e^Z_{2k}), \\
\end{aligned}
\end{equation}
where $q_0$ is the detection caused by the vacuum signal (i.e.~dark counts) and $e_0^Z = e_0=1/2$ is the vacuum $Z$-error rate.

The source components are Fock states $\{\ket{k}\}$, whose yields $\{Y_k\}$ and $Z$-error rates $\{e^Z_k\}$ are independent of $\mu$. The fractions $q_k^\mu$ of the ``$k$-photon component'' in the final detected signals are given by
\begin{equation}
\label{eqn:q Fock}
q^{\mu}_k = P^\mu(k) \dfrac{Y_k}{Q_\mu}.
\end{equation}
The overall gain and QBER are given by
\begin{equation}
\begin{aligned}
\label{eqn:decoy Fock}
Q_\mu &= \sum_{k=0}^{\infty} P^\mu(k) Y_k, \\
E^Z_\mu Q_\mu &= \sum_{k=0}^{\infty} e_{k}^{Z} P^\mu(k) Y_k. \\
\end{aligned}
\end{equation}
The main idea of the decoy-state method is that Alice and Bob can obtain a set of linear equations in the form of Eqs.~\eqref{eqn:q Fock} and \eqref{eqn:decoy Fock} by using a few values of $\mu$. When an infinite amount of decoy states is used, Alice and Bob can estimate all the parameters $Y_{k}$ and $e_{k}^Z$ accurately, with which they can estimate $q_k^\mu$, $e^Z_k$ and the upper bound of the $X$-error by Eq.~\eqref{eqn:EX Fock}.

To apply the decoy-state method to protocols II and III, we modify the phase randomization requirements accordingly. In the decoy-state version of Protocol III, the phases $\phi_a, \phi_b$ should be randomized independently in $[0, 2\pi)$ rather than $\{0, \pi\}$. Also, there will be an additional phase-sifting condition, $|\phi_a-\phi_b|=0$ or $\pi$. Such random-phase announcement and postselection would not affect the security, with the following reasons, similar to the argument of the equivalence between protocols II and III. In particular, Observation \ref{obs:same rho} in Appendix~\ref{Sc:SecureAnnounce} still holds.

First, we want to argue that the random-phase postselection of $|\phi_a-\phi_b|=0$ or $\pi$ would not affect the security. In fact, any phase postselection would not be affected by Eve's announcement. Note that the random phases $\phi_a, \phi_b$ are determined by Alice and Bob locally. Thus, the sifted qubit pairs, $\rho_{A^\prime B^\prime}^{(n)}$, would be the same in the two cases: 1) Alice and Bob independently randomize the phases, $\phi_a, \phi_b$, and employ this phase postselection $|\phi_a-\phi_b|=0$ or $\pi$; 2) Charlie randomizes the same phases, $\phi_a=\phi_b$, to $A$ and $B$.

Second, we want to argue that the random-phase announcement would not affect the security. Eve announces the detection events before Alice and Bob's random-phase announcement. Thus, her hacking strategy cannot depend on the random phases. Note that with the postselection of $|\phi_a-\phi_b|=0$ or $\pi$, the announced phase $\phi_a$ (or $\phi_b$) can be regarded as the phase reference in Protocol III.

Third, we apply infinite decoy states to estimate $q_k^\mu$ and $e^Z_k$ accurately. In practical implementations, it is interesting to explore if finite decoy states, such as the widely applied vacuum and weak decoy states, are enough to make a valid estimation.

Now, we can modify Protocol III with continuous phase randomization $\phi_a, \phi_b$ and the decoy-state method, namely Protocol IV, as shown in Fig.~\ref{fig:Pro4}(a).

\textbf{\uline{Protocol IV}}
\begin{enumerate}
\item 
State preparation:
Alice and Bob randomly select $\mu_a,\mu_b$ from a given set $\{\mu_0, \mu_1,...\}/2$. They prepare coherent state $\ket{\sqrt{\mu_a}}$ and $\ket{\sqrt{\mu_b}}$ on optical modes $A, B$ separately. They initial their qubits $A^\prime, B^\prime$ in $\ket{+i}$ and independently add random phase $\phi_a, \phi_b\in [0,2\pi)$ on the optical modes $A, B$. Alice applies the control gate $C_{\pi}$, defined in Eq.~\eqref{eq:Cpi}, to qubit $A^\prime$ and optical pulse $A$. Similarly, Bob applies $C_{\pi}$ to $B^\prime$ and $B$.

\item 
Measurement: The two optical pulses $A$ and $B$, are sent to an untrusted party, Eve, who is supposed to perform interference measurement and record which detector ($L$ or $R$) clicks.

\item 
Announcement: Eve announces the detection result, $L$/$R$ click or failure, for each round.

\item 
Sifting: When Eve announces an $L/R$ click, Alice and Bob keep the qubits of systems $A^\prime$ and $B^\prime$. Bob applies a Pauli $Y$-gate to his qubit if Eve's announcement is $R$ click. After Eve's announcement, Alice and Bob announce their encoded intensities $\mu_a, \mu_b$ and phases $\phi_a, \phi_b$. They keep the signal if $\mu_a = \mu_b$ and $|\phi_a-\phi_b|=0$ (or $\pi$). Bob applies a Pauli $Y$-gate to his qubit if $|\phi_a-\phi_b|=\pi$.

\item 
Parameter estimation: After many rounds of the above steps, Alice and Bob end up with a joint $2n$-qubit state, denoted by $\rho_{A^\prime B^\prime}^{(n)}$.  They then perform random sampling on the remaining $\rho_{A^\prime B^\prime}^{(n)}$ to estimate $E^Z$ and infer $E^X$ by Eq.~\eqref{eqn:EX Fock}.

\item 
Key distillation: Alice and Bob apply a standard EDP when the error rates are below a certain threshold. The distillation ratio $r$ is given by Eq.~\eqref{eqn:Shor-Preskill key rate}. Once Alice and Bob obtain $nr$ (almost) pure EPR pairs, they both perform local $Z$ measurements on the qubits to generate private keys.
\end{enumerate}

Finally, we need to reduce the entanglement-based protocol to a prepare-and-measure protocol. Following the Shor-Preskill argument, we move the key measurement in Step \ref{StepExtraction} before the EDP, the parameter estimation and the $C_{\pi}$ gates. That is, they measure the systems $A^\prime$ and $B^\prime$ at the beginning. Therefore, the entanglement-based protocol (Protocol IV) becomes the PM-QKD protocol, as shown in Fig.~\ref{fig:Pro4}(b).

\begin{figure*}[htbp]
\centering
\includegraphics[width=16cm]{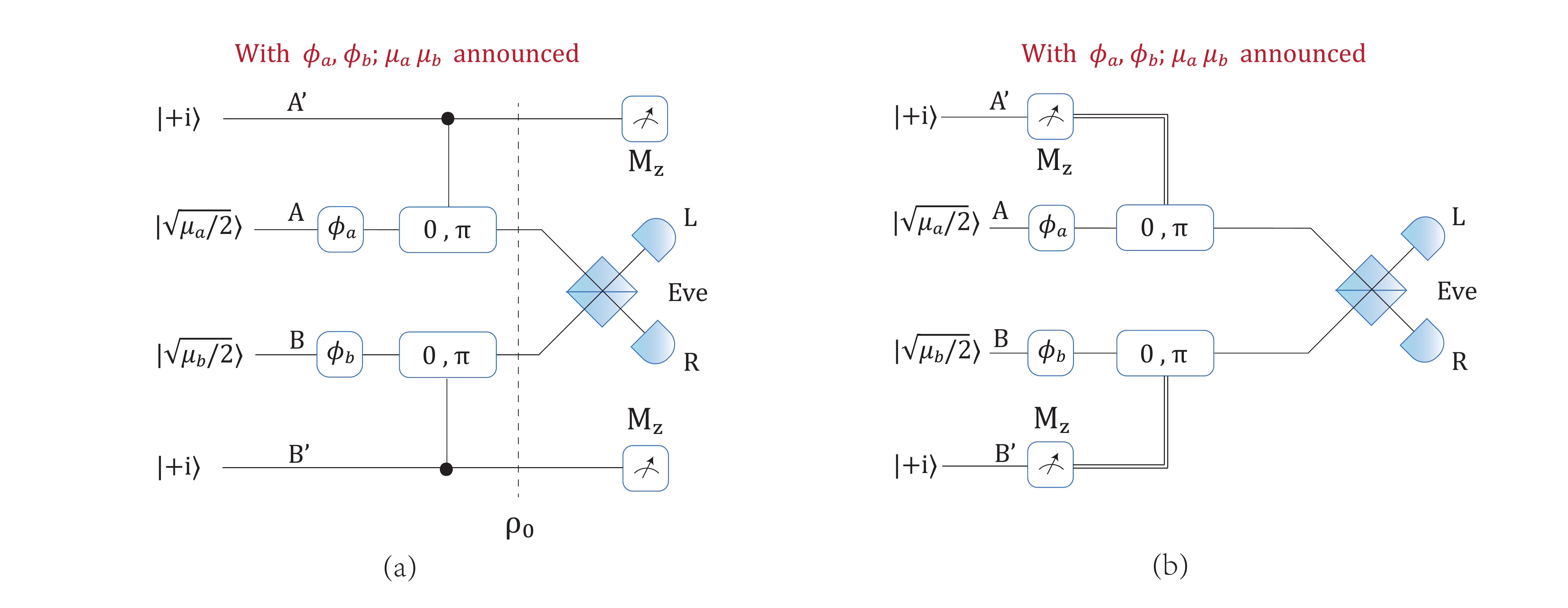}
\caption{(a) Schematic diagram of Protocol IV, where the decoy-state method is applied. The random phase $\phi_a, \phi_b\in [0, 2\pi)$. (b) Schematic diagram of PM-QKD protocol.}
\label{fig:Pro4}
\end{figure*}

\textbf{\uline{PM-QKD protocol}}
\begin{enumerate}
\item 
State preparation:
Alice randomly generates a key bit $\kappa_a\in\{0,1\}$, a random phase $\phi_a\in[0,2\pi)$, and a random intensity $\mu_a\in\{\mu/2, \mu_1/2, \mu_2/2, \cdots\}$. She prepares a coherent-state optical pulse $\ket{\sqrt{\mu_a}e^{i(\phi_a + \pi\kappa_a)}}_A$. Similarly, Bob generates a key bit $\kappa_b$ and prepares $\ket{\sqrt{\mu_b}e^{i(\phi_b+\pi\kappa_b)}}_B$.


\item 
Measurement: The two optical pulses $A$ and $B$ are sent to an untrusted party, Eve, who is supposed to perform interference measurement and record which detector ($L$ or $R$) clicks.

\item 
Announcement: Eve announces the detection result, $L$/$R$ click or failure, for each round.

\item 
Sifting: When Eve announces an $L$/$R$ click, Alice and Bob keep the key bits, $\kappa_a$ and $\kappa_b$. Bob flips his key bit $\kappa_b$ if Eve's announcement is $R$ click. After Eve's announcement, Alice and Bob announce their encoded intensities and phases $\mu_a, \phi_a; \mu_b, \phi_b$. They maintain the signal if $\mu_a = \mu_b$ and $|\phi_a - \phi_b| = 0$ or $\pi$. Bob flips his key bit $\kappa_b$ if $|\phi_a-\phi_b|=\pi$.

\item 
Parameter estimation: After many rounds of the above steps, Alice and Bob end up with joint $n$ pairs of the raw key. They then perform random sampling on this raw key. With Eqs.~\eqref{eqn:q Fock} and \eqref{eqn:decoy Fock}, they estimate $q_{k}$ and $e^Z_{k}$.

\item 
Key distillation: Alice and Bob apply classical communication for error correction and privacy amplification. The key distillation ratio $r$ is given by Eq.~\eqref{eqn:Shor-Preskill key rate}.
\end{enumerate}

Note that the PM-QKD protocol mentioned above is the exact PM-QKD protocol mentioned in the main text with the decoy-state method contained.

Here, in Protocol IV and the PM-QKD protocol, there are some unrealistic assumptions remaining.
\begin{enumerate}
\item
The phase of Alice and Bob's coherent state is locked; that is, the phase reference of Alice's and Bob's coherent state is the same.
\item
The phase-sifting condition, $\phi_a = \phi_b$ can be satisfied with an acceptable probability.
\end{enumerate}
We will discuss how to remove these two requirements in Appendix~\ref{Sc:SecurePost}, and, hence, make the protocol practical.

\subsection{Phase PostCompensation}\label{Sc:SecurePost}
Here, we modify the PM-QKD protocol to remove the requirement of phase locking between Alice and Bob, and also relax the postselection condition of $|\phi_a-\phi_b|=0$ or $\pi$. The method is shown in Fig.~\ref{fig:phasesel}.

First, we relax the postselection condition of $|\phi_a-\phi_b|=0$ or $\pi$ by dividing the phase interval $[0,2\pi)$ into $M$ slices $\{\Delta_j\}$ for $0\le j \le M-1$, where $\Delta_j = [2\pi j/M, 2\pi (j+1)/M)$. We change the phase postselection condition of $\phi_a = \phi_b$ to slice postselection $j_a=j_b$. Here $j_a, j_b$ are defined as
\begin{equation} \label{eq:jajb}
\begin{aligned}
j_a &= [\frac{\phi_a}{2\pi} M] \mod M, \\
j_b &= [\frac{\phi_b}{2\pi} M] \mod M,
\end{aligned}
\end{equation}
where $[\cdot]$ is the round function to output the nearest integer. That is, Alice announces the $j_a$ th slice $\Delta_{j_a}$, which her random phase $\phi_a$ falls into. Similarly, Bob announces $j_b$. The sifting condition is $|j_b - j_a| = 0$ or $M/2$. This technique is first used in the phase-encoding MDI-QKD \cite{Ma2012Alternative}. Announcing the phase slices $j_a$ and $j_b$ rather than the exact phase $\phi_a$ and $\phi_b$ will only leak less information to Eve. Thus, all the security-proof results from previous subsections apply here. Applying Eq.~\eqref{eqn:Shor-Preskill key rate}, we also need to add a phase-sifting factor $2/M$. Therefore, the final key rate is given by
\begin{equation} \label{eq:keyRateapp}
\begin{aligned}
R_{PM} &\ge \frac{2}{M} Q_\mu [ 1 - f H(E_{\mu}^Z) - H(E_{\mu}^X) ], \\
\end{aligned}
\end{equation}
where $Q_\mu$ is the total gain of the pulses, $E_{\mu}^Z$ is the overall QBER, the phase error rate $E_{\mu}^X$ is given by Eq.~\eqref{eqn:EX Fock}, and $f$ is the error correction efficiency. We need to point out that, when $M$ is too small, the misalignment caused by phase slices $\Delta_{j_a}$ and $\Delta_{j_b}$ is big and results in a large QBER $E_{\mu}^Z$.

Second, we remove the requirement of phase locking. In the phase postselection step, either Alice or Bob announces that the random phase is sufficient. Without loss of generality, we assume that Alice announces the phase and Bob performs the sifting of $|j_b - j_a| = 0$ or $M/2$. With less information announced, the key rate still holds.

Suppose that the difference between Alice's and Bob's phase references, denoted by $\phi_0\in[0,2\pi)$, is fixed but unknown. In sifting, Bob needs to figure out the value of $\phi_0$. Define an offset, where $0\le j_0\le M-1$, as
\begin{equation} \label{eq:j0}
\begin{aligned}
j_0 \equiv [\frac{\phi_0}{2\pi} M] \mod M,
\end{aligned}
\end{equation}
During sifting, Bob can estimate the offset $j_0$ for each pulse. The estimation accuracy would not affect the security. In fact, in the security proofs, we assume that Eve knows the phase references ahead. Suppose Bob compensates the offset by $j_d$, and he has the freedom to choose $j_d$ from $\{0,1,\cdots,M-1\}$. In a practical scenario, normally the phase references (and, hence, the offset $j_0$) change slowly with time. Then, Bob can figure out the proper phase compensation offset $j_d$ by minimizing the QBER from random sampling, as shown in Fig.~\ref{fig:phasesel}.

For the case where the phase reference difference $\phi_0$ varies, we can treat the changing caused by Eve. Such deviation will introduce bit errors, but not help Eve learn key information, since the variation of $\phi_0$ is independent of the key $\kappa_a, \kappa_b$. Note that in the security proof, we assume that Eve knows the phase references accurately.

With all the modifications above, we propose the following practical version of PM-QKD protocol.

\textbf{\uline{PM-QKD protocol with phase postcompensation}}
\begin{enumerate}
\item
State preparation: Alice randomly generates a key bit $\kappa_a\in\{0,1\}$, a random phase $\phi_a \in[0,2\pi)$, and a random intensity $\mu_a\in\{\mu/2, \mu_1/2, \mu_2/2, \cdots\}$. She prepares a coherent-state optical pulse $\ket{\sqrt{\mu_a}e^{i(\phi_a + \pi\kappa_a)}}_A$. Similarly, Bob generates a key bit $\kappa_b$ and prepares $\ket{\sqrt{\mu_b}e^{i(\phi_b + \pi\kappa_b)}}_B$.

\item
Measurement: The optical pulses, systems $A$ and $B$, are sent to an untrusted party, Eve, who is supposed to perform interference measure and record which detector ($L$ or $R$) clicks.

\item
Announcement: Eve announces the detection results, $L$/$R$ click or failure, for each round.

\item
Sifting: When Eve announces an $L$/$R$ click, Alice and Bob keep the key bits, $\kappa_a$ and $\kappa_b$. Bob flips his key bit $\kappa_b$ if Eve's announcement is $R$ click. After Eve's announcement, Alice and Bob announce their encoded intensities $\mu_a$ and $\mu_b$, respectively. They keep the bits if $\mu_a = \mu_b$.

\item
Parameter estimation:
Alice and Bob run the above procedures many times and then run the following procedures.

\begin{enumerate}
\item
For each bit, Alice announces the phase slice index $j_a$, given in Eq.~\eqref{eq:jajb}, and she randomly samples a certain amount of key bits and announces them for QBER testing.

\item
In the phase postcompensation method, given an offset compensation $j_d\in \{0,1,...,M/2 -1\}$, Bob sifts the sampled bits with the phase postselection condition $|j_b + j_d - j_a| \mod M = 0$ or $M/2$. For the case of $M/2$, Bob flips the key bit $\kappa_b$. After sifting, Bob calculates the QBER $E^Z$ with Alice's sampling key bits. Bob tries all possible $j_d\in\{0,1,\cdots,M-1\}$, and figures out the proper $j_d$ to minimize the sampling QBER. Using the phase postselection condition with the proper $j_d$, Bob sifts (and flips if needed) the unsampled bits and announces the locations to Alice. Alice sifts her key bits accordingly.

\item
Alice and Bob analyze the overall gain $Q_{\mu_i}$ and QBER $E^Z_{\mu_i}$ for different values of intensities $\mu_a =\mu_b =\mu_i/2$. They estimate the phase error rate $E^{(X)}_{\mu}$ by Eq.~\eqref{eqn:EX Fock}.
\end{enumerate}

\item
Key distillation: Alice and Bob apply error correction and privacy amplification. The key rate $r$  is given by Eq.~\eqref{eq:keyRateapp}.
\end{enumerate}

Here, in the phase postcompensation, Bob does not need to fix a $j_d$ for the whole experiment. Instead, he can adjust $j_d$ in real time. Given the total number of slices $M$, define the phase-stable time, $\tau_M$, to be the time period during which the phase fluctuation is smaller than $\pi/M$. Roughly speaking, Bob adjusts $j_d$ for every $\tau_M$. In practice, Bob would randomly sample some detections for tests. Here, in order to obtain a good estimation of $j_d$, Bob should sample enough detections for the offset test within $\tau_M$. Then, the phase postcompensation method would put a requirement on the detection rate.

Now, we consider a practical example. Considering the slice number $M=16$, the tolerable fluctuation should be about $\pi/M=0.196$ rad. The phase of a continuous-wave (CW) laser pulse usually fluctuates randomly, and its behavior can be modeled as a random walk. In an experimental work of testing the phase drift for a 36.5-km Mach-Zehnder interferometer \cite{minavr2008phase}, within a time duration of $0.2$ ms, the mean phase fluctuation is about $0.15$ rad. In the data presented in recent TF-QKD \cite{Lucamarini2018TF}, via the total transmission distance around $200$ km, the phase drift rate is about $3$ rad ms$^{-1}$. For a time period of $0.05$ ms, the mean phase fluctuation is about $0.15$ rad, which is less than $\pi/M$. Then, in this case, one can set $\tau_M=0.05$ ms for testing. Considering a GHz QKD system, there are $5\times10^4$ signals sent out within $\tau_M$. Using the same parameters for a longer transmission distance of $100$ km from Alice to Eve (same distance for Bob) using standard telecom fiber (0.2 dB/km), the transmission loss is $\eta = 20\,dB = 10^{-2}$, and there will be $500$ data left for post-processing, which is sufficient for sampling tests. From here, one can see that our phase postcompensation method is feasible with current technology. For a longer transmission distance, the phase postcompensation method may not be sufficient. One can use the phase calibration method as an alternative, as introduced in the main text.

\section{Model and Simulation} \label{Sc:SimuPara}
Here, we show the details for the simulations of different QKD schemes. We mainly follow the simulation model used in the literature \cite{Ma2005Practical,Ma2008PhD}. We calculate the detection probabilities of PM-QKD in Appendix~\ref{Sc:DetProb}, and then evaluate the yields, gains, and error rates in Appendix~\ref{Sc:GainQBER}. In simulations, we also compare the performance of PM-QKD with the decoy-state BB84 \cite{Bennett1984Quantum}, MDI-QKD \cite{Lo2012Measurement}, and the linear key-rate bound \cite{Pirandola2017Fundamental}. We list the used formulas and simulation parameters in Appendix~\ref{Sc:Others}.

\subsection{Detection probabilities} \label{Sc:DetProb}
The channel is assumed to be a pure loss one and symmetric for Alice and Bob with transmittance $\eta$ (with detector efficiency $\eta_d$ taken into account). We suppose that the lights from Alice and Bob faithfully interfere and are measured with dark-count rate $p_d$. Without loss of generality, we only consider the case where $\kappa_a = \kappa_b = 0$, $j_a = j_b = 0$.

In PM-QKD, the global phases $\phi_a, \phi_b$ are divided into $M$ slices. Without phase locking, the phase references for Alice and Bob can differ. Therefore, even if the announced slices meet, $|j_a - j_b| = 0$, there may be a considerable difference $\phi_\delta \equiv \phi_b-\phi_a$ between the two global phases $\phi_a, \phi_b$.

First, for a fixed $\phi_\delta$, we calculate the detection probability in the single-photon case. After Alice and Bob's encoding, the state becomes
\begin{equation}
(e^{i\phi_a}a^\dagger + e^{i\phi_b}b^\dagger)\ket{00}_{A0,B0} = (a^\dagger + e^{i\phi_\delta}b^\dagger)\ket{0}.
\end{equation}
Here, the transmittance $\eta$ includes channel losses and detection efficiencies, which transfer $a^\dagger,b^\dagger$ according to
\begin{equation}
\begin{aligned}
a^\dagger &\rightarrow \sqrt{\eta}a^\dagger + \sqrt{1-\eta}s^\dagger, \\
b^\dagger &\rightarrow \sqrt{\eta}b^\dagger + \sqrt{1-\eta}t^\dagger,
\end{aligned}
\end{equation}
where $s^\dagger, t^\dagger$ are the modes coupled to the environment. Before Eve's interference, the state is
\begin{equation}
(\sqrt{\eta}(a^\dagger + e^{i\phi_\delta}b^\dagger) + \sqrt{1 - \eta}(s^\dagger + e^{i\phi_\delta}t^\dagger))\ket{0},
\end{equation}
and after Eve's interference, the state becomes
\begin{equation}
\{\sqrt{\dfrac{\eta}{2}}[ (1+e^{i\phi_\delta})l^\dagger + (1-e^{i\phi_\delta})r^\dagger ] + \sqrt{1 - \eta}(s^\dagger + e^{i\phi_\delta}t^\dagger) \}\ket{0},
\end{equation}
where $l^\dagger$ and $r^\dagger$ are the creation operators for the modes to $L$- and $R$-detectors, respectively. Then, the detection probabilities are given by
\begin{equation} \label{eq:p1}
\begin{aligned}
p_0^{(1)} &= 1 - \eta, \\
p_l^{(1)} &= \eta \cos^2(\dfrac{\phi_\delta}{2}), \\
p_r^{(1)} &= \eta \sin^2(\dfrac{\phi_\delta}{2}), \\
p_{lr}^{(1)} &= 0,
\end{aligned}
\end{equation}
where $p_0^{(1)}$, $p_l^{(1)}$, $p_r^{(1)}$, and $p_{lr}^{(1)}$ are, respectively, the probabilities for no click, $L$ click, $R$ click, and double click for the single-photon case.

Then, we consider the $k$-photon case, for which we regard as $k$ identical and independent click events of the single-photon case; the click probabilities are given by
\begin{equation}\label{eq:pk}
\begin{aligned}
p_0^{(k)} &= (p_0^{(1)})^k,\\
p_l^{(k)} &= (p_0^{(1)} + p_l^{(1)})^k - (p_0^{(1)})^k, \\
p_r^{(k)} &= (p_0^{(1)} + p_r^{(1)})^k - (p_0^{(1)})^k, \\
p_{lr}^{(k)} &= 1 - p_0^{(k)} - p_l^{(k)} - p_r^{(k)} \\
&= 1 + (p_0^{(1)})^k -(p_0^{(1)} + p_l^{(1)})^k -  (p_0^{(1)} + p_r^{(1)})^k,
\end{aligned}
\end{equation}
where $p_0^{(k)}$, $p_l^{(k)}$, $p_r^{(k)}$, and $p_{lr}^{(k)}$ are, respectively, the probabilities for no-click, $L$ click, $R$ click, and double click for the $k$-photon case.

Now, we take into account the effects caused by the detector dark counts $p_d$. Since the dark counts are independent of the photon-click events, we can draw a table for all the cases, as shown in Table \ref{tab:darkprob}. The final click probabilities for the $k$-photon case are given by
\begin{equation}\label{eq:Pk}
\begin{aligned}
P_0^{(k)} &= (1-p_d)^2 p_0^{(k)}, \\
P_L^{(k)} &= (1-p_d)^2 p_l^{(k)} + p_d(1-p_d)(p_0^{(k)}+p_l^{(k)}) \\
&= p_d(1-p_d)p_0^{(k)}+(1-p_d) p_l^{(k)}, \\
P_R^{(k)} &= (1-p_d)^2 p_r^{(k)} + p_d(1-p_d)(p_0^{(k)}+p_r^{(k)}) \\
&= p_d(1-p_d)p_0^{(k)}+(1-p_d) p_r^{(k)}, \\
P_{LR}^{(k)} &= (1-p_d)^2 p_{lr}^{(k)} + p_d(1-p_d)(p_l^{(k)} + p_r^{(k)} + 2 p_{lr}^{(k)}) + p_d^2 \\
&= (1-p_d^2) p_{lr}^{(k)} + p_d(1-p_d)(p_l^{(k)} + p_r^{(k)}) + p_d^2.
\end{aligned}
\end{equation}

\begin{table*}[htbp]
\begin{tabular}{cc|cccc}

\hline
	\multicolumn{2}{c|}{\textbf{Dark count condition}} & \multicolumn{4}{c}{\textbf{Contributions to the overall click probabilities}} \\

\hline
$L$-dark count & $R$-dark count & No-click $P^{(k)}_0$ & $L$ click $P^{(k)}_L$ & $R$ click $P^{(k)}_R$ & Double-click $P^{(k)}_{LR}$ \\
\hline

$(1-p_d)$ & $(1-p_d)$ & $p^{(k)}_0$ & $p^{(k)}_l$ & $p^{(k)}_r$ & $p^{(k)}_{lr}$ \\

$(1-p_d)$ & $p_d$ & 0 & 0 & $p^{(k)}_0 + p^{(k)}_r$ & $p^{(k)}_l + p^{(k)}_{lr}$ \\

$p_d$ & $(1-p_d)$ & 0 & $p^{(k)}_0 + p^{(k)}_l$ & 0 & $p^{(k)}_r + p^{(k)}_{lr}$ \\

$p_d$ & $p_d$ & 0 & 0 & 0 & 1 \\
\hline
\end{tabular}
\caption{Probability table of clicks with dark counts present.} \label{tab:darkprob}
\end{table*}

Similarly, we can derive the formulas for detection probabilities with coherent state inputs. The states sent out by Alice and Bob are $\ket{\sqrt{\mu_a} e^{i\phi_a}}$ and $\ket{\sqrt{\mu_b} e^{i\phi_b}}$, respectively, where $\mu_a=\mu_b=\mu/2$. After channel and detection losses, the states become $\ket{\sqrt{\eta\mu_a} e^{i\phi_a}}$ and $\ket{\sqrt{\eta\mu_b} e^{i\phi_b}}$. By going through the BS, the states become
\begin{equation}
\begin{aligned}
\ket{\alpha_L} & = \ket{\dfrac{\sqrt{\eta\mu}}{2} (e^{i\phi_a} + e^{i\phi_b})}= \ket{\dfrac{\sqrt{\eta\mu}}{2} e^{i\phi_a}(1 + e^{i\phi_\delta})  }, \\
\ket{\alpha_R} & = \ket{\dfrac{\sqrt{\eta\mu}}{2} (e^{i\phi_a} - e^{i\phi_b})}= \ket{\dfrac{\sqrt{\eta\mu}}{2} e^{i\phi_a}(1 - e^{i\phi_\delta})  }.  \\
\end{aligned}
\end{equation}
Then, the detection click probabilities are
\begin{equation} \label{eq:Pmu}
\begin{aligned}
P_\mu(\bar{L}) &= (1 - p_d)\exp{(-|\alpha_L|^2)} \\
&= (1 - p_d) \exp{(\, -\eta\mu\, cos^2(\dfrac{\phi_\delta}{2}) \,)}, \\
P_\mu(L)&= 1-P_\mu(\bar{L}), \\
P_\mu(\bar{R}) &= ( 1 - p_d )\exp{(-|\alpha_R|^2)} \\
&= (1 - p_d) \exp{(\, -\eta\mu\, sin^2(\dfrac{\phi_\delta}{2}) \,)}, \\
P_\mu(R) &= 1-P_\mu(\bar{R}),
\end{aligned}
\end{equation}
where $P_\mu(L)$ and $P_\mu(\bar{L})$ are the probabilities of the $L$ click and no $L$ click, respectively, and $P_\mu(R)$ and $P_\mu(\bar{R})$ are for the $R$-detector. These probabilities are similar to Eq.~\eqref{eq:Pk}. The difference is that the probabilities in Eq.~\eqref{eq:Pk} are mutually exclusive, while in Eq.~\eqref{eq:Pmu}, the probabilities $P_\mu(L)$ and $P_\mu(R)$ are independent.

All the above probability formulas are functions of the phase difference $\phi_\delta$. In the simulation, one needs to integrate $\phi_\delta$ over its probability distribution $f_{\phi_\delta}(\phi)$. Recall that the phase reference deviation between Alice and Bob is $\phi_0$. Here, with the phase postcompensation method introduced in Appendix~\ref{Sc:SecurePost}, we assume that $\phi_0$ is uniformly distributed in $[-\pi /M,\pi /M)$, denoted by
\begin{equation}
\phi_0 \sim U[-\pi /M,\pi /M).
\end{equation}
In this case, $\phi_a$ and $\phi_b$ are uniformly distributed in $[0,2\pi/M)$ and $[\phi_0, 2\pi/M + \phi_0)$, respectively,
\begin{equation}
\begin{aligned}
\phi_a &\sim U[0, 2\pi/M), \\
\phi_b &\sim U[\phi_0, 2\pi/M + \phi_0).
\end{aligned}
\end{equation}
For a fixed $\phi_0$, the probability distribution of the phase difference $\phi_\delta$ is given by
\begin{equation} \label{eq:fdelta}
f_{\phi_\delta}^{(\phi_0)}(\phi) =
\begin{cases}
(\frac{M}{2\pi})^2 [\phi + (\frac{2\pi}{M} - \phi_0)] \quad &\phi\in [\phi_0 - \frac{2\pi}{M}, \phi_0) ,\\
(\frac{M}{2\pi})^2 [-\phi + (\frac{2\pi}{M} + \phi_0)] \quad &\phi\in [\phi_0 , \phi_0 + \frac{2\pi}{M}) ,\\
0 & \textit{otherwise}.
\end{cases}
\end{equation}

Furthermore, in the simulation, one needs to take another integration of $\phi_0$ over $[-\pi /M,\pi /M)$ to get the yields, gain, and error rates.

\subsection{Yields, gain, and error rates} \label{Sc:GainQBER}
We first analyze the yield $\{Y_k\}$ for $k\ge0$ and the total gain $Q_\mu$ according to the probability formulas given in Appendix~\ref{Sc:DetProb}. Note that, in PM-QKD, both no-click and double-click events are regarded as failure detection. The yield $Y_k$ of the $k$-photon state is given by Eqs.~\eqref{eq:p1}, \eqref{eq:pk}, and \eqref{eq:Pk},
\begin{widetext}
\begin{equation} \label{eq:Yk}
\begin{aligned}
Y_k &= P_L^{(k)} + P_R^{(k)} \\
&= (1-p_d) (p_l^{(k)} + p_r^{(k)}) + 2p_d(1-p_d) p_0^{(k)} \\
 &= (1-p_d) [ (1-\eta\, \sin^2(\dfrac{\phi_\delta}{2}))^k + (1-\eta\, \cos^2(\dfrac{\phi_\delta}{2}))^k ] - 2(1-p_d)^2 (1-\eta)^k \\
 &\approx ( 1 - p_d) ( 1 + (1 - \eta)^k ) - 2( 1 - p_d )^2 ( 1 - \eta )^k \\
 &= (1-p_d) [ 1 - (1 - 2p_d)(1 -\eta)^k ] \\
 &\approx 1 - (1 - 2p_d)(1 -\eta)^k. \\
\end{aligned}
\end{equation}
\end{widetext}

Here, in the first approximation, we take the first order of a small $\phi_\delta$ and ignore $\sin^2(\dfrac{\phi_\delta}{2})=0$. In the second approximation, we omit the higher-order term $p_d(1-(1-2p_d)(1-\eta)^k)=O(p_d^2+p_d\eta)$, since normally $p_d$ is small.

The total gain $Q_\mu$ is given by Eq.~\eqref{eq:Pmu},
\begin{widetext}
\begin{equation} \label{eq:Qmu}
\begin{aligned}
Q_\mu &= P(L)P(\bar{R}) + P(\bar{L})P(R) \\
  &= (1 - p_d) \exp{(\, -\eta\mu\, \sin^2(\dfrac{\phi_\delta}{2}) \,)}[1 - (1 - p_d) \exp{(\, -\eta\mu\, \cos^2(\dfrac{\phi_\delta}{2}) \,)} ] \\&\quad\quad + (1 - p_d) \exp{(\, -\eta\mu\, \cos^2(\dfrac{\phi_\delta}{2}) \,)}[1 - (1 - p_d) \exp{(\, -\eta\mu\, \sin^2(\dfrac{\phi_\delta}{2}) \,)} ] \\
  &\approx  (1-p_d)[1 - (1-p_d)\exp{(-\eta\mu_b)}] + p_d(1-p_d)\exp{(-\eta\mu_b)} \\
  &= (1-p_d) [ 1 - (1 - 2p_d)e^{-\eta\mu} ] \\
  &\approx 1 -e^{-\eta\mu} +2p_de^{-\eta\mu},
\end{aligned}
\end{equation}
\end{widetext}
where the two approximations are the same as the ones used in Eq.~\eqref{eq:Yk}. Since $\phi_\delta$ does not affect the yields $Y_k$ and gain $Q_\mu$ much with the first-order approximation, the average yields $Y_k$ and gain $Q_\mu$ over $\phi_\delta\in[-\pi /M,\pi /M)$ can be regarded as the ones when $\phi_\delta=0$. Also, the results of Eqs.~\eqref{eq:Yk} and \eqref{eq:Qmu} are consistent with the ones presented in the regular QKD model \cite{Ma2005Practical}, as shown in Eqs.~\eqref{eq:YemuBB84} and \eqref{eq:QEmuBB84}.

Then, we calculate the bit error rate for the $k$-photon signal $e^Z_k$ and the QBER $E^Z_\mu$ with coherent states input $\mu_a = \mu_b = \mu/2$. The bit error rate $e^Z_k(\phi_\delta)$ is given by
\begin{equation} \label{eq:ek}
e^Z_k(\phi_\delta) = \dfrac{P_R^{(k)}}{P_L^{(k)} + P_{R}^{(k)}},
\end{equation}
where $P_L^{(k)}$ and $P_{R}^{(k)}$ are given by Eqs.~\eqref{eq:p1}, \eqref{eq:pk}, and \eqref{eq:Pk}.
The average bit error rate of $e^Z_k$ over $\phi_\delta$ is given by the following integral,
\begin{equation} \label{eqn:eZkint}
e^Z_k =\dfrac{M}{2\pi} \int^{\pi/M}_{-\pi/M} d\phi_0 \int^{3\pi/M}_{-3\pi/M} d\phi \; f_{\phi_\delta}^{(\phi_0)} (\phi) e^Z_k(\phi), \\
\end{equation}
where $f_{\phi_\delta}^{(\phi_0)} (\phi)$ is given by Eq.~\eqref{eq:fdelta}. For the case $k=1$, we explicitly calculate the error rate $e^Z_1$, given by Eqs.~\eqref{eq:p1}, \eqref{eq:pk}, \eqref{eq:Pk}, and \eqref{eq:ek},
\begin{equation}
\begin{aligned} \label{eqn:eZ1}
e^Z_1(\phi_\delta) &= \dfrac{(1-p_d)^2 p_l^{(1)} + p_d(1-p_d)(p_0^{(1)}+p_l^{(1)})}{(1-p_d)^2 \eta + p_d(1-p_d)(2-\eta)} \\
&= \dfrac{\sin^2(\phi_\delta/2) \eta + p_d(1-\eta)}{\eta+2p_d(1-\eta)}, \\
\end{aligned}
\end{equation}
and integrate Eq.~\eqref{eqn:eZ1} with Eq.~\eqref{eqn:eZkint},
\begin{equation} \label{eq:e1}
\begin{aligned}
e^Z_1 &=\dfrac{M}{2\pi} \int^{\pi/M}_{-\pi/M} d\phi_0 \int^{3\pi/M}_{-3\pi/M} d\phi\; f_{\phi_\delta}^{(\phi_0)} (\phi) e^Z_1(\phi) \\
&= \dfrac{ e_{\delta} \eta + p_d(1-\eta)}{\eta+2p_d(1-\eta)}, \\
\end{aligned}
\end{equation}
\vskip 8mm
where
\begin{equation} \label{eq:edelta}
\begin{aligned}
e_{\delta} = \frac{\pi }{M}-\frac{M^2 }{\pi ^2}\sin ^3\left(\frac{\pi }{M}\right)
\end{aligned}
\end{equation}
can be regarded as the misalignment error rate. Equation~\eqref{eq:e1} can be understood that, if a signal causes a click, the error rate is $e_{\delta}$; and if a dark count causes a click, the error rate is $e_0=1/2$. Since the double-click events are discarded, when both a signal and a dark count cause clicks, the error rate is $e_{\delta}$. Then, we can approximate $e^Z_k$ with the same spirit, for the contributions in $Y_k$, given in Eq.~\eqref{eq:Yk},
\begin{equation} \label{eq:ekapp}
\begin{aligned}
e^Z_k &\approx \frac{p_d(1 -\eta)^k+e_\delta[1 -(1 -\eta)^k]}{Y_k}. \\
\end{aligned}
\end{equation}
Here, note that in Eq.~\eqref{eq:ekapp}, we ignore the double clicks caused by multiphoton signals, which would further reduce the misalignment error and, hence, the value of $e^Z_k$. The QBER $E^Z_\mu(\phi_\delta)$ for a given $\phi_\delta$ is given by substituting Eq.~\eqref{eq:Pmu},
\begin{widetext}
\begin{equation}\label{eq:EZ}
\begin{aligned}
E^Z_\mu(\phi_\delta) &= \dfrac{P(\bar{L})P(R)}{P(\bar{L})P(R) + P(L)P(\bar{R})} \\
&= \frac{1}{Q_\mu}{(1 - p_d) \exp{(\, -\eta\mu\, \cos^2(\dfrac{\phi_\delta}{2}))}\{1-(1 - p_d) \exp[-\eta\mu \sin^2(\frac{\phi_\delta}{2})]\}} \\
&= \frac{1}{Q_\mu} (1-p_d) \{\exp{[\, -\eta\mu\, \cos^2(\dfrac{\phi_\delta}{2})]}-(1 - p_d) e^{-\eta\mu}\} \\
&\approx \frac{e^{-\eta\mu}}{Q_\mu} [p_d + \eta\mu\sin^2(\frac{\phi_\delta}{2})] \\
\end{aligned}
\end{equation}
\end{widetext}
similar to Eq.~\eqref{eq:e1},
\begin{equation} \label{eq:Emusimu}
\begin{aligned}
E^Z_\mu &= \dfrac{M}{2\pi}\int^{\pi/M}_{-\pi/M} d\phi_0 \int^{3\pi/M}_{-3\pi/M} d\phi\; f_{\phi_\delta}^{(\phi_0)} (\phi) E^Z_\mu(\phi_\delta) \\
&\approx \frac{(p_d + \eta\mu e_\delta)e^{-\eta\mu}}{Q_\mu},
\end{aligned}
\end{equation}
where $e_\delta$ is given in Eq.~\eqref{eq:edelta}. The results of Eqs.~\eqref{eq:Emusimu} and \eqref{eq:e1} are consistent with the one presented in the regular QKD model \cite{Ma2005Practical}, as shown in Eqs.~\eqref{eq:YemuBB84} and \eqref{eq:QEmuBB84}, with a slight difference. The difference is caused by how the double-click events are processed. In BB84, Alice and Bob need to randomly assign a bit to double clicks, while in PM-QKD, double clicks are discarded.

Now, we can evaluate the key rate with the above model. Let us restate the key-rate formula, Eq.~\eqref{eq:keyRateapp},
\begin{equation}
R_{PM} \ge \frac{2}{M} Q_\mu [ 1 - f H(E_{\mu}^Z) - H(E_{\mu}^X) ],
\end{equation}
where the phase error rate is given by Eq.~\eqref{eqn:EX Fock},
\begin{equation} \label{eq:EmuX}
\begin{aligned}
E_{\mu}^X &= \sum_{k=0}^{\infty} q_{2k+1} e^Z_{2k+1} + \sum_{k=0}^{\infty} q_{2k} (1 - e^Z_{2k}), \\
&\le q_0 e^Z_0 +\sum_{k=0}^\infty e_{2k+1}^Z q_{2k+1} + (1 - q_0 - q_{odd}).
\end{aligned}
\end{equation}
In simulation, $Q_\mu$ and $E_{\mu}^Z$ are given by Eqs.~\eqref{eq:Qmu} and \eqref{eq:Emusimu}. The fractions, $\{q_k\}$ with $k\ge0$, of different photon components $k$ contributing to the valid detections ($L/R$ clicks) are given by
\begin{equation} \label{eq:q}
\begin{aligned}
q_k = \dfrac{P(k)Y_k}{Q_\mu} = e^{-\mu}\dfrac{\mu^k}{k!}\dfrac{Y_k}{Q_\mu}.
\end{aligned}
\end{equation}
For the odd-photon-number component, we have
\begin{equation} \label{eq:qevenodd}
\begin{aligned}
q_{odd} &= \sum_{k=0}^{\infty}  q_{2k+1}  \\
&= \frac{1}{Q_\mu}\sum_{k=0}^{\infty} \frac{Y_{2k+1} \mu^{2k+1} e^{-\mu}}{(2k+1)!} \\
&= \frac{e^{-\mu}}{Q_\mu}\sum_{k=0}^{\infty} \{ \dfrac{\mu^{2k+1}}{(2k+1)!} - (1-2p_d)\dfrac{[(1-\eta)\mu]^{2k+1}}{(2k+1)!} \} \\
&= \frac{e^{-\mu}}{Q_\mu}\{\sinh(\mu) - (1-2p_d)\sinh[(1-\eta)\mu] \}.
\end{aligned}
\end{equation}

To simplify the simulation, we explicitly calculate the $e^Z_k$ and $q_k$ for $0\le k\le 5$. Further zooming into Eq.~\eqref{eq:EmuX}, we have
\begin{equation} \label{eq:EmuXzoom}
\begin{aligned}
E_{\mu}^X &= \sum_{k=0}^{\infty} q_{2k+1} e^Z_{2k+1} + \sum_{k=0}^{\infty} q_{2k} (1 - e^Z_{2k}), \\
&\le q_0 e^Z_0 +(q_1 e^Z_1 + q_3 e^Z_3 + q_5 e^Z_5) + (1 - q_0 - q_1 - q_3 -q_5),
\end{aligned}
\end{equation}
where $e_k^Z$ and $q_k$ are given by Eqs.~\eqref{eq:ek} and \eqref{eq:q}, respectively.

Here, we attach the MALTAB code of PM-QKD key rate for reference.
\twocolumngrid
\begin{lstlisting}[language={Matlab}]
function [RPM] = PMKey(eta,mu,pd,M,Ef)

% k-photon yields Yk by Eq.(B13)
Y0 = 2*pd;
Y1 = 1 - (1 - 2*pd).*(1 - eta);
Y3 = 1 - (1 - 2*pd).*(1 - eta)^3;
Y5 = 1 - (1 - 2*pd).*(1 - eta)^5;

% total gain Q by Eq.(B14)
Qmu = 1 - (1 - 2*pd).*exp(-mu.*eta);

% k-photon clicked fractions by Eq.(B25)
q0 = Y0.*exp(-mu)./Qmu;
q1 = Y1.*mu.*exp(-mu)./Qmu;
q3 = Y3.*(mu.^3).*exp(-mu)./(factorial(3)*Qmu);
q5 = Y5.*(mu.^5).*exp(-mu)./(factorial(5)*Qmu);

% k-photon bit error rate by Eq.(B20)
e0 = 0.5;
edelta = pi/M - (M/pi)^2 * (sin(pi/M))^3;
e1Z = ( pd*(1-eta)   + edelta*(1 - (1-eta)  ) )/Y1;
e3Z = ( pd*(1-eta)^3 + edelta*(1 - (1-eta)^3) )/Y3;
e5Z = ( pd*(1-eta)^5 + edelta*(1 - (1-eta)^5) )/Y5;

% QBER by Eq.(B22)
% EZPM = (2*pi/M) * ( pd + eta*mu*edelta ).*exp(-eta*mu)./Qmu; % original errneous code
EZPM = ( pd + eta*mu*edelta ).*exp(-eta*mu)./Qmu;

% phase error rate EX by Eq.(B27)
EX = q0.*e0 + q1.*e1Z + q3.*e3Z + q5.*e5Z + (1 - q0 - q1 - q3 - q5);
EX = min(EX,0.5);

% key rate by Eq.(B23)
rPM = -Ef.*h(EZPM) + 1 - h(EX);
RPM = (2/M).*Qmu.*rPM;
RPM = max(RPM,0);
end

function [entropy] = h(p)
entropy = -p.*log2(p)-(1-p).*log2(1-p);
entropy(p<=0 | p>=1) = 0;
end
\end{lstlisting}

\subsection{Simulation formulas for BB84 and MDI-QKD protocol and simulation parameters} \label{Sc:Others}
We compare our derived key rate with that of the prepare-and-measure BB84 \cite{Bennett1984Quantum} protocol, whose key rate is given by the GLLP-decoy method \cite{gottesman04}.

The key rate of the decoy-state BB84 protocol is given by \cite{Lo2005Decoy}
\begin{equation}
R_{BB84} = \dfrac{1}{2} Q_\mu\{- f H(E_{\mu}^Z) + q_1[1 - H(e^X_1)] \}, \\
\end{equation}
where $1/2$ is the basis sifting factor.

In the simulation, the yield and error rates of the $k$-photon component are given by \cite{Ma2008PhD}
\begin{equation} \label{eq:YemuBB84}
\begin{aligned}
Y_k &= 1-(1-Y_0)(1-\eta)^{k}, \\
e_k &= e_d + \frac{(e_0-e_d)Y_0}{Y_k}, \\
\end{aligned}
\end{equation}
where $e_d$ is the intrinsic misalignment error rate caused by a phase reference mismatch. The gain and QBER are given by
\begin{equation} \label{eq:QEmuBB84}
\begin{aligned}
Q_\mu &= \sum_{k=0}^{\infty} \frac{\mu^k e^{-\mu}}{k!} Y_k, \\
&= 1-(1-Y_0)e^{-\eta\mu}, \\
E_\mu &= \sum_{k=0}^{\infty} \frac{\mu^k e^{-\mu}}{k!} e_kY_k, \\
&= e_d + \frac{(e_0-e_d)Y_0}{Q_\mu}, \\
\end{aligned}
\end{equation}
where $Y_0=2p_d$ and $e_0 = 1/2$.

The key rate of MDI-QKD is given by \cite{Lo2012Measurement}
\begin{equation}
\begin{aligned}
R_{MDI} = \dfrac{1}{2}\{ Q_{11} [1 - H(e_{11})] -f Q_{rect} H(E_{rect}) \},
\end{aligned}
\end{equation}
where $Q_{11} = \mu_a\mu_b e^{-\mu_a-\mu_b}Y_{11}$ and $1/2$ is the basis sifting factor. We take this formula from Eq.~(B27) in Ref.~\cite{Ma2012Alternative}. In simulation, the gain and error rates are given by

\begin{widetext}
\begin{equation}
\begin{aligned}
Y_{11} & = (1-p_d)^2 [ \dfrac{\eta_a\eta_b}{2} + (2\eta_a + 2\eta_b -3\eta_a\eta_b)p_d + 4(1-\eta_a)(1-\eta_b)p_d^2 ], \\
e_{11} & = e_0Y_{11} - (e_0- e_d)(1-p_d^2)\dfrac{\eta_a\eta_b}{2}, \\
Q_{rect} & = Q_{rect}^{(C)} + Q_{rect}^{(E)}, \\
Q_{rect}^{(C)} & = 2(1-p_d)^2 e^{-\mu^\prime/2} [1 - (1-p_d)e^{-\eta_a\mu_a/2}][1 - (1-p_d)e^{-\eta_b\mu_b/2}], \\
Q_{rect}^{(E)} & = 2p_d(1-p_d)^2 e^{-\mu^\prime/2}[I_0(2x) - (1-p_d)e^{-\mu^\prime/2}]; \\
E_{rect} Q_{rect} & = e_d Q_{rect}^{(C)} + (1 - e_d) Q_{rect}^{(E)}, \\
\end{aligned}
\end{equation}
\end{widetext}
Here,
\begin{equation} \label{eqn:MDI mu x}
\begin{aligned}
\mu^\prime & = \eta_a \mu_a + \eta_b \mu_b, \\
x & = \frac12\sqrt{\eta_a \mu_a \eta_b \mu_b}, \\
\end{aligned}
\end{equation}
where $\mu^\prime$ denotes the average number of photons reaching Eve's beam splitter, and  $\mu_a = \mu_b = \mu/2, \eta_a = \eta_b = \eta$. We take these formulas from Eqs.~(A9),~(A11),~(B7),~and (B28)-(B31) in Ref.~\cite{Ma2012Alternative}.

In 2014, Takeoka\textit{~et~al.} derived an upper bound of the key rate of the point-to-point-type QKD protocols \cite{takeoka2014fundamental},
\begin{equation}
R_{TGW} = -\log_2(\dfrac{1-\eta}{1+\eta}).
\end{equation}
Later, Pirandola\textit{~et~al.} established a tight upper bound \cite{Pirandola2017Fundamental},
\begin{equation}\label{eq:plob}
R_{PLOB} = - \log_2(1-\eta),
\end{equation}
which is the linear key-rate bound used in the main text. Note that Eq.~\eqref{eq:plob} is the secret key capacity of the pure loss channel, as it coincides with the lower bound previously known \cite{pirandola2009direct}.

\section{Beam-splitting Attack: Invalidation of the photon number channel model} \label{sc:attack}

Here, we argue that the tagging technique does not work for PM-QKD, by showing that the lower bound of the key rate from naively employing the tagging technique can be higher than an upper bound derived from a specific attack, in some parameter regime. The failure of the tagging technique stems from the failure of the photon number channel model used in the security proof \cite{Ma2008PhD}. In other words, the photon number channel model does not hold for the case when the random phases are announced during postprocessing.

\subsection{Key rate with or without the photon number channel}
Assuming the existence of the photon number channel model in PM-QKD, we can apply the tagging method and obtain a key-rate formula following the GLLP analysis \cite{gottesman04,Lo2005Decoy},
\begin{equation}\label{eq:keygllp}
r_{GLLP}=q_1[1-H(e^X_1)]- f H(E_{\mu}^Z),
\end{equation}
where $q_1$ is the single-photon detection ratio, $e^X_1$ and $E_{\mu}^Z$ are the single-photon phase error rate and total QBER, and $f$ is the error correction efficiency.

As for comparison, our PM-QKD key-rate formula is given by Eq.~\eqref{eqn:Shor-Preskill key rate},
\begin{equation}
r_{PM} = 1 - H(E^Z_\mu) - H(E^X_\mu),
\end{equation}
where $E_{\mu}^X$ is bounded by Eq.~\eqref{eq:EmuXzoom}
\begin{equation}
\begin{aligned}
E_{\mu}^X &= \sum_{k=0}^{\infty} q_{2k+1} e^Z_{2k+1} + \sum_{k=0}^{\infty} q_{2k} (1 - e^Z_{2k}), \\
&\le q_0 e^Z_0 +(q_1 e^Z_1 + q_3 e^Z_3 + q_5 e^Z_5) + (1 - q_0 - q_1 - q_3 -q_5).
\end{aligned}
\end{equation}

\subsection{Beam-splitting attack}
Now, we consider a beam-splitting attack (BS-attack), where Eve sets beam splitters with transmittance $\eta$ on both Alice and Bob's side to simulate a lossy channel. She intercepts the pulses on $A$ and $B$, regardless of the keys and bases of the pulses, and stores the reflected lights on her quantum memories, modes $A0$ and $B0$. Eve interferes the two transmitted pulses, modes $A1$ and $B1$, and announces the results. After Alice's and Bob's phase announcements, Eve performs an unambiguous state discrimination (USD) \cite{jaeger1995optimal} on the states on modes $A0$ and $B0$ separately to guess the key information. Since the states on $A0(B0)$ and $A1(B1)$ are uncorrelated because of the beam splitting of coherent states, the BS-attack will not introduce any error or other detectable effects. Apparently, this BS-attack is an individual attack. We will calculate Eve's successful probability for guessing the key bits unambiguously, and calculate the mutual information $I(\kappa_{a(b)}:E)$, from which we can derive an upper bound on the secure key rate.

In PM-QKD, Alice and Bob prepare coherent states $\{\ket{\sqrt{\mu_a}e^{i(\phi_a+\kappa_a)}}_A$ and $\ket{\sqrt{\mu_b}e^{i(\phi_b+\kappa_b)}}_B\}$, where $\mu_a = \mu_b = \mu/2$, $\phi_a,\phi_b\in[0,2\pi)$ are the random phases, and $\kappa_a,\kappa_b\in\{0,\pi\}$ are the key information. The states on system $A0,B0$ reflected by Eve's beam splitters are given by
\begin{equation}
\begin{aligned}
\ket{\sqrt{(1-\eta)\mu_a} e^{i(\phi_a+\kappa_a)}}_{A0}, \\
\ket{\sqrt{(1-\eta)\mu_b} e^{i(\phi_b+\kappa_b)}}_{B0}.
\end{aligned}
\end{equation}
The two remaining pulses on the system $A1,B1$, used for interference are given by
\begin{equation}
\begin{aligned}
\ket{\sqrt{\eta\mu_a} e^{i (\phi_a+\kappa_a)}}_{A1}, \\
\ket{\sqrt{\eta\mu_b} e^{i (\phi_b+\kappa_b)}}_{B1}.
\end{aligned}
\end{equation}
Suppose Eve interferes systems $A1$ and $B1$; then the total gain $Q_\mu$ is given by
\begin{equation}
\label{eqn:gain}
Q_\mu = 1 - e^{-\eta\mu}.
\end{equation}
Here, we consider the signal with $\phi_a = \phi_b$, and Eve holds perfect single-photon detectors.
Without loss of generality, we consider Eve's attack on system $A0$.
If Eve cannot get the sifted phase information $\phi_a$, then no matter the value of key phases $\kappa_a$, the state on $A0$ from Eve's perspective is
\begin{widetext}
\begin{equation}
\dfrac{1}{2\pi}\int_{0}^{2\pi} d\phi_a \ket{\sqrt{(1-\eta)\mu_a}e^{i(\phi_a+\kappa_a)}}_{A0}\bra{\sqrt{(1-\eta)\mu_a}e^{i(\phi_a+\kappa_a)}} = \sum_{k=0}^{\infty} P^{(1-\eta)\mu_a}(k) \ket{k}_{A0}\bra{k},
\end{equation}
\end{widetext}
where
\begin{equation}
\begin{aligned}
P^{(1-\eta)\mu_a}(k) = \frac{[(1-\eta)\mu_a]^k}{k!}e^{-(1-\eta)\mu_a} .
\end{aligned}
\end{equation}
In this case, Eve cannot obtain any information about $\kappa_a$. While in PM-QKD, the phase $\phi_a$ will be announced. In this case, Eve can first rotate the states on system $A0$ by $-\phi_a$. Now, the state becomes
\begin{equation}
\begin{aligned}
\dfrac{1}{2}&(\ket{\sqrt{(1-\eta)\mu_a}}_{A0}\bra{\sqrt{(1-\eta)\mu_a}} \\
&+ \ket{-\sqrt{(1-\eta)\mu_a}}_{A0}\bra{-\sqrt{(1-\eta)\mu_a}}).
\end{aligned}
\end{equation}
Then, Eve only needs to determine whether the phase $\kappa_a$ is $0$ or $\pi$ to obtain the key information. As shown in Ref.~\cite{jaeger1995optimal}, if two pure states $\ket{p}$ and $\ket{q}$ are prepared with the same \textit{a priori} $1/2$, the maximum probability of \emph{unambiguous} discrimination is
\begin{equation}
P_{des} = 1 - |\braket{p|q}|.
\end{equation}
Therefore, the probability of successfully distinguishing the states $\ket{\sqrt{(1-\eta)\mu_a}}$ and $\ket{-\sqrt{(1-\eta)\mu_a}}$ is given by
\begin{equation}
P_{suc} = 1 - |\braket{\sqrt{(1-\eta)\mu_a}|-\sqrt{(1-\eta)\mu_a}}| = 1 - e^{-(1-\eta)\mu}.
\end{equation}
Note that Eve already know the value $\kappa_a - \kappa_b$ with interference results on $A1, B1$. Eve only needs to learn either of $\kappa_a$ or $\kappa_b$. Thus, Eve's successful unambiguous measurement probability is
\begin{equation}
P_{BS} = 1- (1-P_{suc})^2 = 1 - e^{-2(1-\eta)\mu}.
\end{equation}
If the light intensity $\mu$ is large enough, $P_{BS} = 1$, then Eve can learn the key information with a high probability.

With unambiguous state discrimination, the mutual information of Eve's measurement result (denoted by variable $E$) and Alice's and Bob's key $\kappa_{a(b)}$ is
\begin{equation}
I(\kappa_{a(b)}:E) = P_{BS} =1 - e^{-2(1-\eta)\mu}.
\end{equation}
Suppose that there is no extra error induced by other factors; then the BS-attack provides an upper bound on the key rate,
\begin{equation} \label{eqn: rBS}
\begin{aligned}
r_{BS} & = [1-I(\kappa_{a(b)}:E)] \\
& = e^{-2(1-\eta)\mu} \\
& = \frac{e^{-2\mu}}{(1-Q_\mu)^2}.
\end{aligned}
\end{equation}
Here, the third equality is based on Eq.~\eqref{eqn:gain}.

\subsection{Comparison}
Now we make a simulation to compare the key rate in the following cases: with the ``tagging'' method, by our security proof, and the upper bound under BS-attack.
In the discussion, we neglect the misalignment error and dark counts.

First, under BS-attack, the yield is given by
\begin{equation}
Y_k = 1 - (1-\eta)^k,
\end{equation}
and the fraction of the $k$-photon component is
\begin{equation}
q_k = (e^{-\mu}\dfrac{\mu^k}{k!})\dfrac{Y_k}{Q_\mu} = [1 - (1-\eta)^k]\dfrac{e^{-\mu}\mu^k}{k! Q_\mu}.
\end{equation}
Note that $q_0=0$.

Second, in the BS-attack scenario, Eve's operations will not introduce any error; hence,
\begin{equation}
\begin{aligned}
e^Z_k &= 0, \quad\forall k \ge 0, \\
E^Z_\mu &= 0. \\
\end{aligned}
\end{equation}
Then Eqs.~\eqref{eq:keygllp}, ~\eqref{eqn:Shor-Preskill key rate}, and ~\eqref{eq:EmuXzoom} can be simplified to
\begin{equation}
\begin{aligned} \label{eqn: runderBS}
r_{GLLP} &= q_1 \\
&=  \eta\mu e^{-\mu}, \\
r_{PM} &= [1 - H(E^X)] \\
&\ge [1 - H(1 - q_1 - q_3 - q_5)].
\end{aligned}
\end{equation}

We compare the key-rate formula $r_{GLLP}, r_{PM}$ in Eq.~\eqref{eqn: runderBS} with the key-rate upper bound by BS-attack $r_{BS}$ in Eq.~\eqref{eqn: rBS}. We set the total light intensity to be a typical value $\mu=0.5$, and we adjust $\eta$ to compare the key-rate performance. As shown in Fig.~\ref{fig:BSattackMu05}, the GLLP ``tagging'' formula for the ``single-photon'' component cannot hold under the BS-attack for a transmittance $\eta<0.6$.
If we fix the transmittance $\eta=0.2$ and adjust $\mu$, Fig.~\ref{fig:BSattackEta02} shows that the GLLP tagging formula cannot hold under the BS-attack.

\begin{figure}[hbtp]
\centering
\resizebox{8cm}{!}{\includegraphics{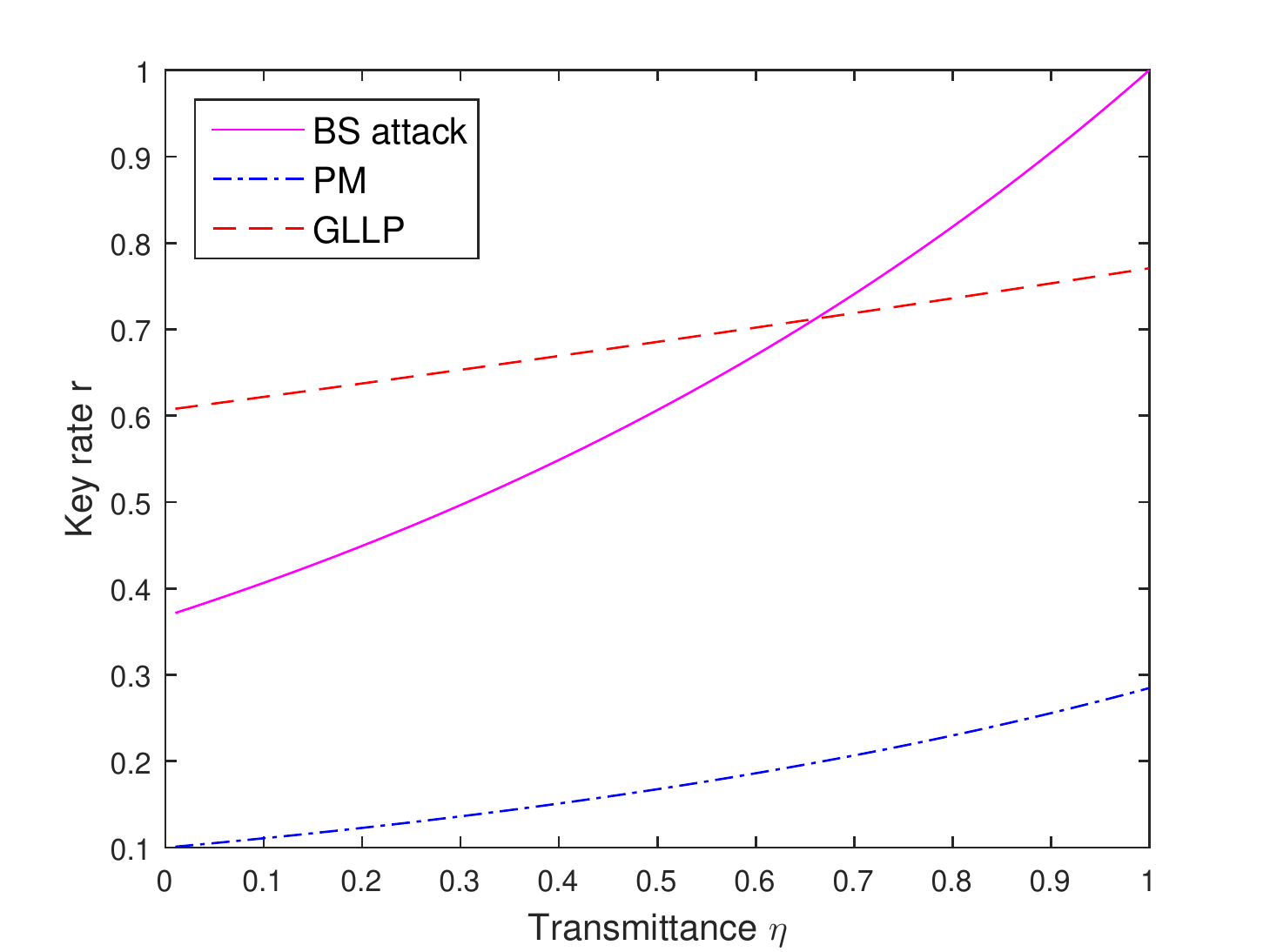}}
\caption{Key-rate comparison with fixed light intensity $\mu=0.5$. Here we can see that the GLLP tagging formula for the single-photon component cannot hold under the BS-attack when the transmittance $\eta<0.6$.} \label{fig:BSattackMu05}
\end{figure}

\begin{figure}[hbtp]
\centering
\resizebox{8cm}{!}{\includegraphics{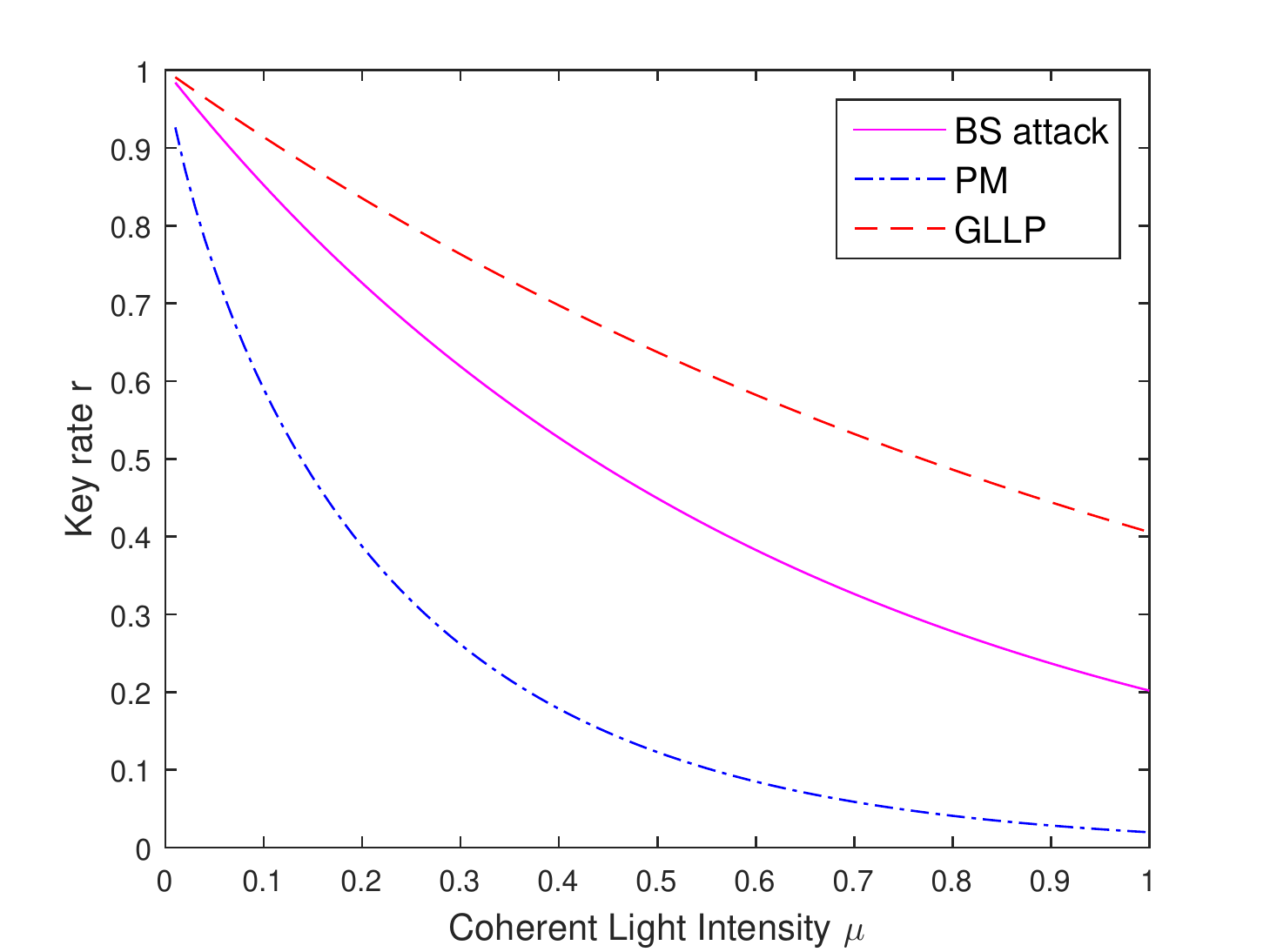}}
\caption{Key-rate comparison with fixed transmittance $\eta=0.2$. The GLLP tagging formula for the single-photon component cannot hold under the BS-attack.} \label{fig:BSattackEta02}
\end{figure}

The BS-attack results serve to invalidate the tagging method even the single-photon component cannot exist, since the announcement of phase $\phi_a,\phi_b$ will leak more information of the key bits $\kappa_a $ and $\kappa_b$.

Recently, Wang\textit{~et~al.} proposed an eavesdropping strategy to the TF-QKD protocol \cite{Wang2018Effective}. The attack can also be performed against the PM-QKD protocol. Under such an attack,
Eve can learn all the key bits. The key rate provided by the GLLP formula, Eq.~\eqref{eq:keygllp}, is 0.5 (for all the clicked signals) while that given by our security proof is strictly 0. This also shows the invalidation of the GLLP key-rate formula, and our security proof is still valid under such an attack.



\section{Comparison with other phase-encoding MDI-QKD}
Here, we compare three related phase-encoding MDI-QKD protocols, including the previous phase-encoding MDI-QKD (Scheme I in Ref.~\cite{Tamaki2012PhaseMDI}), the recently proposed TF-QKD \cite{Lucamarini2018TF}, and our protocol (PM-QKD). For the simplicity of the statements, we assume that the phase reference of Alice and Bob is locked in all the protocols.
\begin{figure}[htbp]
\centering
\includegraphics[width=8cm]{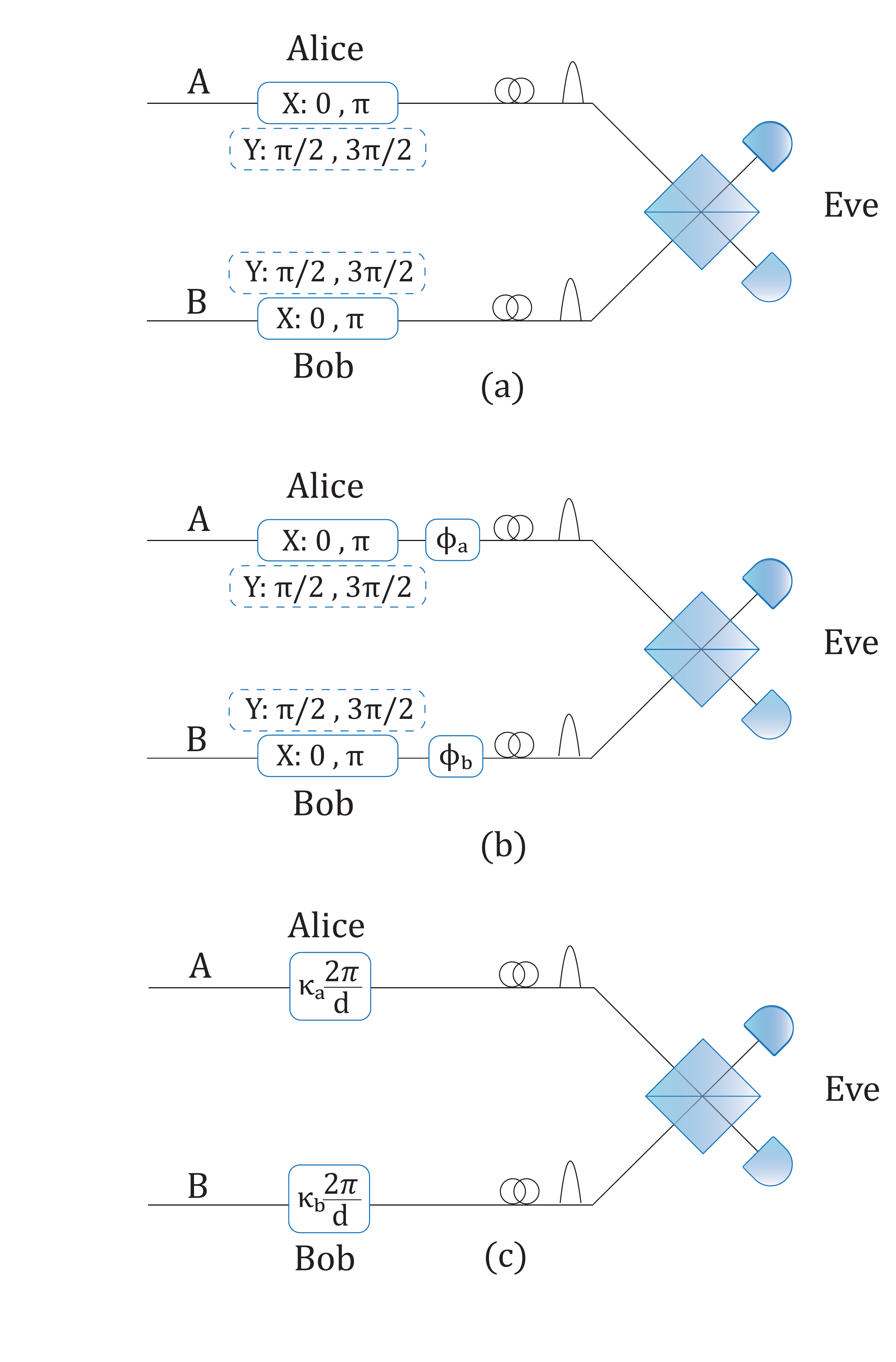}
\caption{(a) Schematic diagram of phase-encoding MDI-QKD (Scheme I in Ref.~\cite{Tamaki2012PhaseMDI}). (b) Schematic diagram of TF-QKD, which is the decoy-state version of the scheme in (a).  (c) Schematic diagram of $d$-phase PM-QKD.}
\label{fig:mdivstfvspm}
\end{figure}

\subsection{Phase-encoding MDI-QKD} \label{AppSub:PEMDI}
In the phase encoding MDI-QKD (Scheme I in Ref.~\cite{Tamaki2012PhaseMDI}), Alice generates two random bits $\kappa_a, \beta_a$ as the key and the random choice of the $X$ or $Y$ basis, respectively, and then generates a coherent pulse $\ket{\sqrt{\mu/2} e^{i(\pi\kappa_a + \pi\beta_a/2)} }_A$. Similarly, Bob generates $\ket{\sqrt{\mu/2} e^{i(\pi\kappa_b + \pi\beta_b/2)} }_B$. They send their coherent pulses to Eve, who is supposed to perform an interference measurement and announce the detection results [Fig.~\ref{fig:mdivstfvspm}(a)]. After Eve's announcement, Alice and Bob announce the basis information $\beta_a, \beta_b$ and perform basis sifting.


Note that there is only single-photon detection in this protocol. Therefore, the number of clicked signals scales with $O(\sqrt{\eta})$, where $\eta$ is the total transmittance between Alice and Bob. However, there is a considerable source flaw caused by using a coherent state as an approximation of an ideal single photon state, resulting in a final key rate of $O(\eta)$.

\subsection{TF-QKD}
To avoid the performance deterioration caused by the source flaw in the phase-encoding MDI-QKD protocol above, a natural idea is to apply the decoy-state method \cite{Lo2005Decoy}, that is, to perform phase randomization, and to estimate the fraction of $q_1$ and the error rate $e_1$ of the single-photon component.

Recently, Lucamarini\textit{~et~al.} modified the phase-encoding QKD protocol, namely TF-QKD \cite{Lucamarini2018TF}. As shown in Fig.~\ref{fig:mdivstfvspm}(b), Alice generates two random bits $\kappa_a, \beta_a$ as the key, chooses randomly the $X$ or $Y$ basis, and modulates another random phase $\phi_a$ for the phase randomization in the decoy-state method. She generates a coherent pulse $\ket{\sqrt{\mu/2} e^{i(\pi\kappa_a + \pi\beta_a/2 + \phi_a)} }_A$. Similarly, Bob generates a coherent pulse $\ket{\sqrt{\mu/2} e^{i(\pi\kappa_b + \pi\beta_b/2 + \phi_b)} }_B$. They send their coherent pulses to Eve, who is supposed to perform the interference measurement and announce the detection results. After Eve's announcement, Alice and Bob announce $\beta_a, \beta_b; \phi_a, \phi_b$ and perform basis sifting and phase sifting.


\subsection{$d$-phase phase-matching QKD}

In the $d$-phase PM-QKD, Alice first generates a random integer $\kappa_a\in\{0, 1,..., d-1\}$ and then prepares a coherent pulse $\ket{\sqrt{\mu/2} \exp{(i\kappa_a \dfrac{2\pi}{d})}}_A$. Similarly, Bob generates $\kappa_b\in\{0,1,..., d-1\}$ and prepares a similar state $\ket{\sqrt{\mu/2} \exp{(i\kappa_b \dfrac{2\pi}{d}})}_B$. They send their coherent pulses to Eve, who is supposed to perform an interference measurement and announce the detection results. If Eve announces detector $L/R$ clicks, Alice and Bob keep the numbers $\kappa_a, \kappa_b$ and which detector clicks.

After many rounds of the steps above, Alice and Bob randomly select some of the data, announce them, and calculate the $L/R$ detection probability for each case. By the estimated probabilities, they calculate the key rate and extract private keys.

As we can see, after Eve's announcement, Alice and Bob will share some mutual information. While Eve cannot fully learn the variable $\kappa_a$, the information-secure key can be generated between Alice and Bob.

The PM-QKD protocol in the main text and Appendix \ref{Sc:SecureProof} corresponds to the case of $d=2$, which is combined with a random phase announcement and the decoy-state method. In Appendix \ref{Sc:SecureProof}, we completed the proof for the PM-QKD of $d=2$ with a random phase announcement. The security proof for this generalized case is left for future work.

\subsection{Comparison of different protocols}

In all three protocols, phase-encoding MDI-QKD \cite{Tamaki2012PhaseMDI}, TF-QKD \cite{Lucamarini2018TF}, and PM-QKD, single-detection clicks on the untrusted node are used as successful measurement events, whose rate scales with $O(\sqrt{\eta})$. Technically, this is the precondition of the key rate having square-root scaling\cite{Lucamarini2018TF}.
Nevertheless, in the phase-encoding MDI scheme, without phase randomization, the single-photon source can be approximated by two weak coherent states, which decreases significantly the key rate of the protocol to $O(\eta)$. In TF-QKD, a single-qubit view is taken and a higher key rate is targeted. Unfortunately, the phase announcement invalidates the existence of the photon number channel model \cite{Ma2008PhD}, and, hence, no enhancement can be claimed by directly applying the decoy-state method onto the phase-encoding MDI-QKD scheme.

Here, in the phase-matching QKD scheme, we switch from the qubit-based view to the optical-mode-based view. As shown above, our proposed phase-matching QKD follows the phase-encoding MDI-QKD scheme and the TF-QKD scheme, by modifying the encoding and basis choice. The name ``phase-matching (MDI-)QKD'' follows ``phase-encoding MDI-QKD''. We removed ``MDI'' (which is not the key point to our work) to make the name concise.

The PM-QKD protocol in the main text and Appendix \ref{Sc:SecureProof} is the one with $d=2$ and random phase announcement. Though the current practical implementations of phase-encoding MDI-QKD / TF-QKD and PM-QKD do resemble each other, the security scenarios are quite different. The TF-QKD can be taken as an extension of the BB84 protocol, which utilizes the single-photon source and is a discrete-variable QKD protocol. In the future, many discrete-variable QKD design techniques, such as a six-state protocol and reference-frame-independent protocol, can be employed in this framework. On the other hand, in PM-QKD, we focus more on the optical modes rather than single-photon states (qubits). It would be interesting to see whether the security proof techniques developed in continuous-variable QKD can be applied to PM-QKD.

\subsection{Recent related works on TF-QKD}
There are some recent works based on TF-QKD. Tamaki\textit{~et~al.} provided a security proof of the TF-QKD protocol \cite{tamaki2018information}. They modified the TF-QKD protocol by introducing a ``test mode'', where Alice and Bob do not announce the phase information and photon-number-channel model holds. The original TF-QKD protocol is called ``code mode''. Following the original security proof of phase-encoding MDI-QKD \cite{Tamaki2012PhaseMDI}, they estimated the phase error $E_X$ by considering the imbalance of different bases. With a fair sampling argument on the test mode and the code mode, the basis imbalance in the code mode can be estimated by the one in the test mode. A square-root-scaling key rate has been derived, but it is significantly lower than ours. Note that, there are still two bases in Tamaki\textit{~et~al.}'s protocol and security proof, which follows the qubit-based view of BB84. The two-basis requirement in the security proof implies that it cannot be applied to our PM-QKD protocol directly, which also highlights the difference between the phase-encoding MDI-QKD / TF-QKD and PM-QKD.

Wang\textit{~et~al.} proposed a ``sending or not sending'' TF-QKD protocol \cite{Wang2018Sending}, which aims to utilize the $Z$-basis of single-photon for key generation, following the viewpoint of TF-QKD. However, the definition of the $Z$-basis encoding seems confusing. In the original TF-QKD protocol, the definitions of the $X$ or $Y$ bases refer to the ancillary bits rather than the real single photon. Also, the $Z$-basis encoding on the real photon does not correspond to a $Z$-basis encoding on the ancillary bits. This protocol needs further studies.

\end{appendix}


\normalem
\bibliography{bibPosBB84}

\end{document}